\documentclass[a4paper,11pt]{article}
\usepackage{test}
\usepackage{paralist,xspace}
\usepackage[authoryear,round]{natbib}
\usepackage{authblk}
\usepackage{hyperref}
\usepackage{doi}
\usepackage{palatino,mathpazo}
\usepackage[T1]{fontenc}
\usepackage[margin=1in]{geometry}



\renewcommand{\E}{E}
\newcommand{\Ex}[2]{\E\left[ #1 \,\mid\, #2 \right]}
\newcommand{\BHA}{Bewley--Huggett--Aiyagari\xspace}
\newcommand{\as}{\ensuremath{a.s.}\xspace}

\title{An Impossibility Theorem for Wealth in Heterogeneous-agent Models with
Limited Heterogeneity\thanks{We thank Jess Benhabib, Dan Cao, Chris Carroll, Makoto Nirei,
Neng Wang, and seminar participants at University of Queensland and Sapporo
Summer Workshop on Monetary and Financial Economics for comments and
suggestions. We are especially grateful to Yuichiro Waki for improving one of
our earlier results as Lemma \ref{lem:BRRA}.}}

\author[1]{John Stachurski\thanks{Email: \href{mailto:john.stachurski@anu.edu.au}{john.stachurski@anu.edu.au}.}}
\author[2]{Alexis Akira Toda\thanks{Email: \href{mailto:atoda@ucsd.edu}{atoda@ucsd.edu}.}}
\affil[1]{Research School of Economics, Australian National University}
\affil[2]{Department of Economics, University of California San Diego}

\numberwithin{equation}{section}
\numberwithin{thm}{section}

\begin{document}
\maketitle

\begin{abstract} 
    It has been conjectured 
    that canonical \BHA heterogeneous-agent models cannot explain
    the joint distribution of income and wealth. The results stated below verify this conjecture and
    clarify its implications under very general conditions.  We show in particular
    that if (i) agents are infinitely-lived, (ii) saving is risk-free, and
    (iii) agents have constant discount factors, then the wealth distribution
    inherits the tail behavior of income shocks (\eg, light-tailedness
    or the Pareto exponent).   Our restrictions on utility require only that
    relative risk aversion is bounded, and a large variety of income processes
    are admitted.  Our results show conclusively that it is necessary to go beyond
    standard models to explain the empirical fact that wealth is
    heavier-tailed than income. We demonstrate through examples that relaxing any of
    the above three conditions can generate Pareto tails.
\medskip

{\bf Keywords:} income fluctuation problem, inequality, moment generating function, tail decay rate.

\medskip

{\bf JEL codes:} C63, D31, D58, E21.
\end{abstract}

\section{Introduction}

When studying wealth inequality, one empirical feature stands out as striking
and persistent over time and space: the wealth distribution exhibits a power
law tail. This fact was first discovered by
\cite*{Pareto1896LaCourbe,Pareto1897Cours} and has since been confirmed by
many studies.\footnote{See \cite*{gabaix2009,Gabaix2016JEP} for introductions
to power laws in economics.} A closely related observation is that the income
distribution is also heavy-tailed, although its Pareto exponent is
significantly larger, implying a heavier tail for wealth than
income.\footnote{\label{fn:exponent}The Pareto exponent for wealth is about
1.5 \citep*{Pareto1897Cours,KlassBihamLevyMalcaiSolomon2006,Vermeulen2018},
versus 2--3 for income \citep*{Atkinson2003,nirei-souma2007,Toda2012JEBO}.
Since the tail probability of a Pareto random variable satisfies $P(X>x)\sim
x^{-\alpha}$, where $\alpha>0$ is the Pareto exponent, a smaller value for
$\alpha$ corresponds to higher tail probability, implying a heavier tail (more
inequality).}


It is well known in the quantitative macroeconomics literature that canonical
\cite*{bewley1977,bewley1983,bewley1986}--\cite*{huggett1993}--\cite*{aiyagari1994}
models have difficulty in explaining the joint distribution of income and
wealth. For example, \cite*{aiyagari1994} documents that the wealth
Gini coefficient is 0.32 in the model, while it is 0.8 in the data.
\cite*{huggett1996} notes that the model-implied top 1\% wealth share is half
of that in the data. 
\cite*{KruegerMitmanPerri2016} argue that idiosyncratic unemployment risk and
incomplete financial markets alone cannot generate a sufficiently
dispersed wealth distribution,\footnote{The literature has extended the
``canonical'' \BHA model to explain the heavy-tailed wealth distribution by
introducing random discount factors \citep*{krusell-smith1998} and
idiosyncratic investment risk \citep*{benhabib-bisin-zhu2011}. We discuss this
literature in Section \ref{sec:possibility}.} even though such dispersion is
crucial for the study of aggregate fluctuations.
More specifically, \cite*{BenhabibBisinLuo2017} show that, in a
setting where income has a Pareto tail and agents use a linear
consumption rule, the Pareto exponent of wealth is either entirely
determined by the distribution of returns on wealth, or equal to the Pareto
exponent of income. They argue that similar results must obtain
with rational agents with constant relative risk aversion (CRRA) preferences
because, in such settings, the policy rules are asymptotically linear.\footnote{In a similar
    vein, \cite*{wang2007} shows that the constant absolute risk aversion
    (CARA) utility augmented with the \cite*{Uzawa1968} discounting function
    generates a stationary wealth distribution and that the wealth
    distribution cannot be more skewed than income.}

In this paper we confirm and significantly extend this conjecture by showing
that, for canonical \BHA models, all attempts to explain the large skewness of
the wealth distribution are bound to fail. By the qualification ``canonical'',
we mean models in which \begin{inparaenum}[(i)] \item agents are
infinitely-lived, \item saving is risk-free, and \item agents have constant
discount factors.  \end{inparaenum} In our main result (Theorem
\ref{thm:impossible}), we prove that the equilibrium wealth distribution inherits the tail
behavior of income shocks in \emph{any} such model.  This is an impossibility
theorem in the following sense: the tail thickness of the model output
(wealth) cannot exceed that of the input (income). If income is light-tailed
(\eg, bounded, Gaussian,  exponential, etc.), so is wealth. If income is
heavy-tailed, so is wealth, but with the same Pareto exponent, contradicting
the empirical relationship between the income and wealth distributions stated
above.  Thus,  one cannot produce a model consistent with the data without
relaxing at least one of assumptions (i)--(iii).

Our findings can be understood via the following intuition. In infinite-horizon dynamic
general equilibrium models, the discount factor $\beta>0$ and the gross
risk-free rate $R>0$ must satisfy the ``impatience'' condition $\beta R<1$,
for otherwise individual consumption diverges to infinity according to results
in \cite*{ChamberlainWilson2000}, which violates market clearing. But under
this impatience condition, we show that rational agents consume 
than what is implied by the permanent income hypothesis $c(a)=(1-1/R)a$, more
than the interest income, and the accumulation equation for wealth $a_t$
becomes a ``contraction'' in the sense that
\begin{equation}
a_{t+1}\le \rho a_t+y_{t+1}\label{eq:accumulate}
\end{equation}
for large enough $a_t$, where $y_{t+1}$ is income and $\rho$ is some positive
constant strictly less than 1. This inequality implies that the income shocks
die out in the long run, and hence the wealth distribution inherits the tail
behavior of income shocks. To obtain \eqref{eq:accumulate}, we use the results
from \cite*{LiStachurski2014}, who show the validity of policy function
iteration for solving income fluctuation problems. With the bound
\eqref{eq:accumulate} in hand, we characterize the tail behavior of wealth
using the properties of the moment generating function and applying several
inequalities such as Markov, H\"older, and Minkowski. 

Relative to the work of \cite*{BenhabibBisinLuo2017} discussed above, our
contributions are as follows: First, we provide a complete proof of the
impossibility result stated above in the context of an equilibrium model with
rational, optimizing agents, thereby confirming their conjecture on optimizing
households with CRRA utility in a general equilibrium
setting.  Second, our results are established in a class of models where
relative risk aversion need not be constant.  We require only that relative
risk aversion is asymptotically bounded. This means that minor deviations from standard
utility functions cannot reverse our results.  Similarly, our income process
is required only to have a finite mean.  Third, we provide a complete
analytical framework on tail thickness that accommodates both light-tailed and
heavy-tailed distributions, and connect it to the joint distribution of income
and wealth.  In the sense that we handle all classes of shocks
and allow for nonstationary additive processes, our proofs extend what is
contained in the related mathematical literature, such as the work of
\cite*{Grey1994} on Pareto tails.\footnote{The extra generality means that our
results are less sharp in other directions, although not in a way that
impinges on our main findings.  See the remark on page~\pageref{r:gk} for more
details.} Moreover, our proofs are almost completely self contained, and hence
can be readily adapted to subsequent research on income and wealth
distributions that tackles extensions to our framework.  

To tie up loose ends, we also show that the conditions of the impossibility
theorem are tight.  In Section \ref{sec:possibility}, we show through
examples that relaxing any of the three assumptions behind our main theorem
(agents are infinitely-lived, saving is risk-free, and the discount factor is
constant) can generate Pareto-tailed wealth distributions.  In doing so, we
draw on existing literature as it pertains to this topic and also
provide a simple, exactly solved model that features heterogeneous discount
factors and a Pareto-tailed wealth distribution.\footnote{This theory connects
    to several applied studies, such as
    \cite*{CarrollSlacalekTokuokaWhite2017} and \cite*{McKay2017}, which use
    numerically solved heterogeneous-agent quantitative models with several
agent types and different discount factors to generate skewed wealth
distributions.}

\subsection{Other literature}

Our work provides a logical converse to the findings of the growing literature
that relies on idiosyncratic investment risk (as opposed to earnings risk) to
explain the Pareto tail behavior in the wealth distribution.


As mentioned in the introduction, it has been known at least since
\cite*{aiyagari1994} that canonical \BHA models have difficulty in matching
the empirical wealth distribution. While the vast majority of papers in this
literature are numerical, several authors have theoretically shown that the
wealth distribution is bounded under certain assumptions.
\citet*[Theorems 3.8, 3.9]{SchechtmanEscudero1977} show the boundedness of wealth when income
is independent and identically distributed (i.i.d.) with a bounded support and
the utility function exhibits asymptotically constant relative risk aversion
(CRRA). \citet*[Proposition 4]{Aiyagari1993WP} relaxes the assumption on the
utility function to bounded relative risk aversion (BRRA). \cite*{huggett1993}
proves the boundedness of wealth when utility is CRRA and income is a
two-state Markov chain with a certain monotonicity property. To prove the
existence of a stationary equilibrium in an Aiyagari economy, building on
results in \cite*{LiStachurski2014} as we do, \citet*[Proposition
4]{Acikgoz2018} proves that wealth is bounded when the income process is a
finite-state Markov chain and the absolute risk aversion coefficient converges
to 0 as agents get richer (which is a weaker condition than BRRA).
\citet*[Proposition 3]{AchdouHanLasryLionsMollWP} show a similar result in a
continuous-time setting under the BRRA assumption.

Our contribution relative to this literature is that we do away with the
boundedness assumption on income and prove, in a fully micro-founded general
equilibrium setting, that wealth inherits the tail behavior of income, whether
it be light-tailed or heavy-tailed. This is a significant contribution, for
showing the boundedness alone does not tell us much about the tail behavior
because any unbounded distributions (with potentially different tail
properties) can be approximated by bounded ones. 



The key to proving our main result is to show that under the impatience
condition $\beta R<1$, agents uniformly consume more than the interest income
(Proposition \ref{prop:PIH}), which implies the AR(1) upper bound
\eqref{eq:accumulate} as a consequence of individual optimization and
equilibrium considerations.  This component of our paper is related to
\cite*{CarrollKimball1996} and \cite*{Jensen2018}, who prove the concavity of
the consumption function when the utility function exhibits hyperbolic
absolute risk aversion (HARA). The concavity of consumption implies a linear
lower bound (though not necessarily as tight as $c(a)\ge ma$ with $m>1-1/R$),
which we exploit to obtain an AR(1) upper bound on wealth accumulation as in
\eqref{eq:accumulate}. In contrast to these papers, we obtain the linear lower
bound on consumption under BRRA, which is a much weaker condition than HARA.

Finally, our paper is related to \cite*{BenhabibBisinZhu2015}, who show that a
\BHA model with idiosyncratic investment risk can generate a Pareto-tailed
wealth distribution. 
To obtain their possibility result, they derive a bound on wealth accumulation similar to \eqref{eq:accumulate} by assuming that agents have CRRA utility, and that earnings and investment risks are mutually independent and i.i.d.\ over time. Our paper is different in that
\begin{inparaenum}[(i)]
\item we focus on the impossibility result in the absence of financial risk, and
\item we consider a less restrictive environment, requiring only bounded (rather than constant) relative risk aversion and minimal restrictions on nonfinancial income.
\end{inparaenum}

\section{Tail thickness via moment generating function}

In this section we define several notions of tail thickness of random
variables using the moment generating function.  To this end, recall that, for
a random variable $X$ defined on some probability space $(\Omega, \mathcal F, P)$,
the moment generating function of $X$ is defined at $s \in \mathbb R$ by
$M_X(s)=\E[\e^{sX}]\in (0,\infty]$. We define light- and heavy-tailed random
variables as follows:

\begin{defn}[Tail thickness]
We say that a random variable $X$ has a \emph{light} upper tail if
$M_X(s)=\E[\e^{sX}]<\infty$ for some $s>0$. Otherwise
we say that $X$ has a \emph{heavy} upper tail.
\end{defn}

\begin{rem}
One can justify this definition as follows. A random variable $X$ is commonly referred to as having a heavy (Pareto) upper tail if there exist constants $A,\alpha>0$ such that $P(X>x)\ge Ax^{-\alpha}$ for large enough $x$, where $\alpha$ is the Pareto exponent. Since for $y\ge 0$ we have $\e^y\ge y^n/n!$ for any $n$ by considering the Taylor expansion, for any $s,x>0$ by Markov's inequality we obtain $$\E[\e^{sX}]\ge \E[\e^{sX}1_{X>x}]\ge \E[\e^{sx}1_{X>x}]=\e^{sx}P(X>x)\ge \frac{(sx)^n}{n!}Ax^{-\alpha}.$$ Taking $n>\alpha$ and letting $x\to\infty$, we obtain $\E[\e^{sX}]=\infty$.
\end{rem}

The following lemma shows that the tail probability of a light-tailed random variable has an exponential upper bound.

\begin{lem}\label{lem:exptail}
If $X$ is a light-tailed random variable, then
\begin{equation}
\label{eq:expbound}
P(X>x)\le M_X(s)\e^{-sx}
\quad \text{for all $x\in \R$}.
\end{equation}
\end{lem}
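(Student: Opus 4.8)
The plan is to apply Markov's inequality in exponential form. Since $X$ is light-tailed, by definition there exists some $s>0$ with $M_X(s) = \E[\e^{sX}] < \infty$; fix such an $s$. The key observation is that for this $s$, the function $x \mapsto \e^{sx}$ is nonnegative and strictly increasing, so the event $\{X > x\}$ is contained in the event $\{\e^{sX} > \e^{sx}\}$, and in fact on $\{X>x\}$ we have $\e^{sX} \ge \e^{sx}$ pointwise.

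First I would write, for any $x \in \R$,
\begin{equation*}
M_X(s) = \E[\e^{sX}] \ge \E[\e^{sX} \mathbf{1}_{\{X > x\}}] \ge \E[\e^{sx}\mathbf{1}_{\{X>x\}}] = \e^{sx}\, P(X > x),
\end{equation*}
where the first inequality uses $\e^{sX} \ge 0$, the second uses that $\e^{sX} \ge \e^{sx}$ on the event $\{X>x\}$ and $s>0$, and the last is just pulling the constant $\e^{sx}$ out of the expectation. Rearranging gives $P(X>x) \le M_X(s)\e^{-sx}$, which is exactly \eqref{eq:expbound}. Note this reasoning is essentially the same Markov-inequality argument already displayed in the remark preceding the lemma, run in the easy direction.

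There is no real obstacle here — the only point worth a moment's care is that the inequality is asserted "for all $x \in \R$," including $x \le 0$, but the derivation above makes no sign assumption on $x$, so it holds verbatim; for $x$ very negative the bound $M_X(s)\e^{-sx}$ simply exceeds $1$ and is trivially true, while for $x > 0$ it is the substantive exponential tail bound. One could optionally remark that the bound can be tightened by taking the infimum over all admissible $s$ (the Chernoff bound), but that is not needed for the statement as given.
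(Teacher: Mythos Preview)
Your proof is correct and takes essentially the same approach as the paper: both apply Markov's inequality via the chain $M_X(s)=\E[\e^{sX}]\ge \E[\e^{sX}\mathbf{1}_{\{X>x\}}]\ge \e^{sx}P(X>x)$ and rearrange. The paper's version is simply terser, omitting the explanatory commentary.
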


\begin{proof}
This is immediate from Markov's inequality, since, for any $x$, we have
$$M_X(s)=\E[\e^{sX}]\ge \E[\e^{sX}1_{X>x}]\ge \E[\e^{sx}1_{X>x}]=\e^{sx}P(X>x).
\qedhere$$
\end{proof}

The moment generating function $M_X(s)$ is convex and $M_X(0)=1$. Therefore the set $\set{s\in \R|M_X(s)<\infty}$ is an interval containing 0, so
\begin{equation}
\lambda=\sup\set{s\ge 0|M_X(s)<\infty}\in [0,\infty]\label{eq:abscissa}
\end{equation}
is well-defined. Taking the logarithm of \eqref{eq:expbound} for $s\in (0,\lambda)$, dividing by $x>0$, and letting $x\to\infty$, we obtain
$$\limsup_{x\to\infty}\frac{1}{x}\log P(X>x)\le -s.$$
Therefore letting $s\uparrow \lambda$, it follows that
\begin{equation}
\limsup_{x\to\infty}\frac{1}{x}\log P(X>x)\le -\lambda.\label{eq:expdecay}
\end{equation}
Using the argument in \citet*[pp.~42--43, Theorem 2.4a]{Widder1941}, one can
easily show that the inequality \eqref{eq:expdecay} is actually an equality,
although this fact plays no role in the subsequent discussion. Motivated by
\eqref{eq:expdecay}, we call the number $\lambda$ in \eqref{eq:abscissa} the
\emph{exponential decay rate} of the random variable $X$.

Next we categorize heavy-tailed random variables. Since the logarithm of a
Pareto random variable is exponential, it is convenient to define the tail
thickness based on the logarithm. Let $X$ be a heavy-tailed random variable
and $X_+=X1_{X\ge 0}$ be its positive part. The moment generating function of
$\log X_+$ is
$$M_{\log X_+}(s)=\E[\e^{s\log X_+}]=\E[X_+^s],$$
which is a convex function that is finite at $s=0$.\footnote{By convention, we
set $0^0=1$. Therefore $\E[X_+^s]\in (0,\infty]$ is well-defined for $s\ge 0$
and $\E[X_+^0]=1$.} By the same argument as above,
\begin{equation}
\alpha=\sup\set{s\ge 0|\E[X_+^s]<\infty}\in [0,\infty]\label{eq:Pareto}
\end{equation}
is well-defined and we have the property
$$\limsup_{x\to\infty}\frac{\log P(X>x)}{\log x}=-\alpha.$$
We call $\alpha$ the \emph{polynomial decay rate} of the random variable $X$.

So far we have defined the tail thickness of a single random variable $X$, but
as we shall see below, it is convenient to define similar concepts for a class
of random variables. Let $\mathcal{T}$ be some nonempty set and $(X_t)_{t\in
\mathcal{T}}$ be a collection of random variables defined on a common
probability space $(\Omega,\mathcal{F},P)$. Then we say that $(X_t)_{t\in
\mathcal{T}}$ is \emph{uniformly light-tailed} if there exists $s>0$ such that
$$M_{\mathcal{T}}(s):=\sup_{t\in \mathcal{T}}\E[\e^{sX_t}]<\infty.$$
If $(X_t)_{t\in \mathcal{T}}$ is uniformly light-tailed, then it immediately follows from Lemma \ref{lem:exptail} that
$$\sup_{t\in \mathcal{T}}P(X_t>x)\le M_{\mathcal{T}}(s)\e^{-sx}$$
for all $x$. Therefore the tail probabilities of $(X_t)_{t\in \mathcal{T}}$
can be uniformly bounded by an exponential function. By taking the supremum
over such $s>0$, we can define the exponential decay rate $\lambda$ of
$(X_t)_{t\in \mathcal{T}}$. We can similarly define uniformly heavy-tailed random variables and their polynomial decay rate $\alpha$ in the obvious way.

The following theorem, which is used in the proof of our main result, shows that a stochastic process that has a certain contraction property inherits the tail behavior of the shocks.

\begin{thm}\label{thm:tailbound}
Let $\phi:\R_+\to\R_+$ be a function such that
\begin{inparaenum}[(i)]
\item $\phi$ is bounded on any bounded set, and
\item $\rho:=\limsup_{x\to\infty} \phi(x)/x<1$.
\end{inparaenum}
Let $X_0\ge 0$ be some real number and $\set{X_t,Y_t}_{t=1}^\infty$ be a nonnegative stochastic process such that
\begin{equation}
X_t\le \phi(X_{t-1})+Y_t\label{eq:ub1}
\end{equation}
for all $t\ge 1$. Then the following statements are true.
\begin{enumerate}
\item\label{item:tailbound1} If $\set{Y_t}_{t=1}^\infty$ has a compact support, then so does $\set{X_t}_{t=1}^\infty$.
\item\label{item:tailbound2} If $\set{Y_t}_{t=1}^\infty$ is uniformly light-tailed with exponential decay rate $\lambda$, then $\set{X_t}_{t=1}^\infty$ is uniformly light-tailed with exponential decay rate $\lambda'\ge (1-\rho)\lambda$.
\item\label{item:tailbound3} If $\sup_t\E[Y_t]<\infty$ and $\set{Y_t}_{t=1}^\infty$ is uniformly heavy-tailed with polynomial decay rate $\alpha$, then $\set{X_t}_{t=1}^\infty$ has a polynomial decay rate $\alpha'\ge \alpha$.
\end{enumerate}
\end{thm}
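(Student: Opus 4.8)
The plan is to turn the superlinearity hypothesis into an affine envelope and then, in each of the three cases, extract a scalar recursion $u_t\le a u_{t-1}+b$ with $a<1$ (or its logarithmic analogue), which a one-line induction shows stays bounded whenever $u_0<\infty$ and $b<\infty$. First I would note that for any $\rho'\in(\rho,1)$ there is a constant $C\ge 0$ with $\phi(x)\le\rho' x+C$ for all $x\ge 0$: by (ii) there is $M$ with $\phi(x)\le\rho' x$ for $x>M$, and by (i) $\phi$ is bounded on $[0,M]$, say by $C$. Hence \eqref{eq:ub1} gives $X_t\le\rho' X_{t-1}+C+Y_t$ for every $t\ge 1$. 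The elementary fact I will invoke repeatedly: if $(u_t)_{t\ge0}$ is a sequence in $[0,\infty]$ with $u_0<\infty$ satisfying $u_t\le a u_{t-1}+b$, $a\in[0,1)$, $b\in[0,\infty)$, then $\sup_t u_t\le\max\{u_0,\,b/(1-a)\}<\infty$.

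For part~\ref{item:tailbound1}, if $Y_t\le B$ \as\ for all $t$, the bound becomes $X_t\le\rho' X_{t-1}+(C+B)$ \as, and applying the elementary fact to the deterministic dominating sequence (legitimate since $X_0$ is a number) gives $0\le X_t\le\max\{X_0,(C+B)/(1-\rho')\}$ \as\ uniformly in $t$, so $\{X_t\}$ has compact support. For part~\ref{item:tailbound3}, fix $s\in(0,\alpha)$; since $\{r\ge0:\sup_t\E[Y_t^r]<\infty\}\supseteq[0,\alpha)$ (interpolation: $y^r\le 1+y^{r'}$ for $0\le r\le r'$, $y\ge0$), we have $\sup_t\E[Y_t^s]<\infty$. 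If $s\ge1$, Minkowski's inequality applied to $X_t\le\rho' X_{t-1}+C+Y_t$ gives $\|X_t\|_s\le\rho'\|X_{t-1}\|_s+(C+\sup_t\|Y_t\|_s)$, and the elementary fact with $u_t=\|X_t\|_s$, $u_0=X_0$, gives $\sup_t\E[X_t^s]<\infty$; if $0<s<1$, subadditivity of $x\mapsto x^s$ on $\R_+$ gives $\E[X_t^s]\le(\rho')^s\E[X_{t-1}^s]+C^s+\sup_t\E[Y_t^s]$, and the elementary fact with $u_t=\E[X_t^s]$, $a=(\rho')^s$, again gives $\sup_t\E[X_t^s]<\infty$. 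Either way $\sup_t\E[X_t^s]<\infty$ for every $s<\alpha$, i.e.\ $\alpha'\ge\alpha$.

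Part~\ref{item:tailbound2} is the crux. Fix $\rho'\in(\rho,1)$ and $s\in(0,(1-\rho')\lambda)$. Exponentiating the bound and taking expectations, $\E[\e^{sX_t}]\le\e^{sC}\E[\e^{s\rho' X_{t-1}}\e^{sY_t}]$; I would then apply H\"older with the conjugate pair $p=1/\rho'$, $q=1/(1-\rho')$, chosen precisely so that the $X_{t-1}$ factor reappears as $(\E[\e^{sX_{t-1}}])^{\rho'}$ and the $Y_t$ factor becomes $(\E[\e^{sY_t/(1-\rho')}])^{1-\rho'}$, the latter finite and uniformly bounded in $t$ because $s/(1-\rho')<\lambda$. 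Writing $u_t:=\E[\e^{sX_t}]$ (so $u_t\ge1$), this reads $u_t\le A\,u_{t-1}^{\rho'}$ with $A<\infty$; taking logarithms gives the log-linear recursion $\log u_t\le\rho'\log u_{t-1}+\log A$, a recursion of the same type (after replacing $A$ by $\max\{A,1\}$), so $\sup_t u_t<\infty$ and the exponential decay rate $\lambda'$ of $\{X_t\}$ is at least $s$. Letting $s\uparrow(1-\rho')\lambda$ and then $\rho'\downarrow\rho$ yields $\lambda'\ge(1-\rho)\lambda$.

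The hard part is this last step, for two reasons: the H\"older exponents must be matched exactly to $\rho'$ so that the recursion closes on $\E[\e^{sX_t}]$ itself rather than on a moment at some other power; and recovering the \emph{sharp} constant $1-\rho$ (rather than a smaller multiple of $\lambda$) forces the two-stage limit $s\uparrow(1-\rho')\lambda$ followed by $\rho'\downarrow\rho$, so every estimate above must be uniform in $t$ for each fixed pair $(\rho',s)$. Parts~\ref{item:tailbound1} and~\ref{item:tailbound3} are, by comparison, routine once the affine envelope and the scalar-recursion lemma are in hand.
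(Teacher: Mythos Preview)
Your proposal is correct and follows the same overall scaffolding as the paper: reduce $\phi$ to an affine map $\rho' x+C$, then treat the three cases separately using, respectively, a bounded scalar recursion, H\"older's inequality, and Minkowski's inequality.  Parts~\ref{item:tailbound1} and~\ref{item:tailbound3} are essentially identical to the paper's (your extra subcase $s<1$ is unnecessary since $\sup_t\E[Y_t]<\infty$ already forces $\alpha\ge1$, but it does no harm).

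The one genuine methodological difference is in Part~\ref{item:tailbound2}.  The paper first \emph{iterates} the affine bound to write $(1-\rho)X_t$ as a convex combination of $Y_0,\dots,Y_t$ and then applies the multi-term H\"older inequality once, which yields an explicit closed-form bound $\E[\e^{sX_t}]\le \e^{sX_0}\max\{1,f(s/(1-\rho))\}$.  You instead keep the one-step recursion, apply the two-term H\"older with the clever conjugate pair $(1/\rho',1/(1-\rho'))$, obtain the log-linear recursion $\log u_t\le\rho'\log u_{t-1}+\log A$, and close with your scalar lemma.  Both routes deliver the same sharp constant $(1-\rho)\lambda$; the paper's gives a cleaner explicit bound in one shot, while yours avoids the generalized H\"older and reuses the same scalar-recursion machinery throughout, which is a tidy unification.
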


\begin{rem}
It could be the case that $\set{Y_t}_{t=1}^\infty$ is heavy-tailed but $\set{X_t}_{t=1}^\infty$ is light-tailed. An obvious example is $\phi(x)\equiv 0$ and $X_t\equiv 0$, in which case the polynomial decay rate of $X_t$ is $\alpha'=\infty$.
\end{rem}

\begin{rem}
The lower bounds on the tail exponents in Theorem \ref{thm:tailbound} are sharp. To see this, suppose that $\rho\in [0,1)$, $X_0=0$, $X_t=\rho X_{t-1}+Y_t$, and $\set{Y_t}_{t=1}^\infty$ is perfectly correlated, so $Y_t=Y_1$ for all $t$. By iteration, we obtain
$$X_t=Y_t+\rho Y_{t-1}+\dots+\rho^{t-1}Y_1=\frac{1-\rho^t}{1-\rho}Y_1.$$
Therefore $X_t\to X=\frac{1}{1-\rho}Y_1$ as $t\to\infty$ almost surely. Hence if $Y_1$ is exponentially-distributed with decay rate $\lambda$ (\ie, $Y_1$ has density $f(x)=\lambda \e^{-\lambda x}$ for $x\ge 0$), then $X$ is exponentially distributed with decay rate $(1-\rho)\lambda$. If $Y_1$ is Pareto-distributed with Pareto exponent $\alpha$ and minimum size 1 (\ie, $Y_1$ has density $f(x)=\alpha x^{-\alpha-1}$ for $x\ge 1$), then $X$ is Pareto-distributed with Pareto exponent $\alpha$ and minimum size $\frac{1}{1-\rho}$.
\end{rem}

\begin{rem}
    \label{r:gk}
\cite*{Grey1994} studies the \cite*{kesten1973} process
\begin{equation}
X_t=A_tX_{t-1}+Y_t\label{eq:Kesten}
\end{equation}
when $\set{A_t,Y_t}_{t=1}^\infty$ is i.i.d.\ and shows under some assumptions
that $X_t$ and $Y_t$ have the same Pareto exponent (implying the same
polynomial decay rate). \cite*{Ghosh2010} extend this result to the Markovian
case. If we set $\phi(x)=\rho x$ in \eqref{eq:ub1} and assume that $Y_t$ is
Markovian, then Theorem \ref{thm:tailbound}(\ref{item:tailbound3}) can be
strengthened by setting $A_t=\rho$ in \eqref{eq:Kesten}. On the other hand,
Theorem~\ref{thm:tailbound} allows us to 
\begin{inparaenum}[(i)]
\item treat the bounded, light-tailed, and heavy-tailed cases
    simultaneously using elementary arguments and
\item handle potential non-stationarity in $\set{Y_t}_{t=1}^\infty$.
\end{inparaenum}
\end{rem}

\begin{rem}
\cite*{Mirek2011} studies the nonlinear recursion $X_t=\psi(X_{t-1},\theta_t)$ when $\set{\theta_t}$ is i.i.d.\ and shows that $X_t$ is heavy-tailed when $\psi$ is asymptotically linear in the sense that $M(\theta)=\lim_{x\to\infty}\psi(x,\theta)/x$ exists and $\abs{\psi(x,\theta)-M(\theta)x}\le Y(\theta)$ for some $Y(\theta)$. Some key assumptions are
\begin{inparaenum}[(i)]
\item $M(\theta)>1$ with positive probability (ensuring random growth, see Assumption H5) and
\item the ``additive'' term $Y(\theta)$ does not dominate the ``multiplicative'' term $M(\theta)$ (see Assumption H7).
\end{inparaenum}
Since our focus is impossibility of heavier tails without multiplicative risk, the assumption $\limsup_{x\to\infty}\phi(x)/x<1$ in Theorem \ref{thm:tailbound} suffices. \cite*{BenhabibBisinZhu2015} apply the \cite*{Mirek2011} result to show that wealth is heavy-tailed.
\end{rem}

\section{Wealth accumulation and tail behavior}

In this section we show that the wealth accumulation equation satisfies the AR(1) upper bound \eqref{eq:accumulate} under the impatience condition $\beta R<1$ and other weak conditions, which allows us to prove that wealth inherits the tail behavior of income.

We consider the following income fluctuation problem \citep*{SchechtmanEscudero1977}:
\begin{subequations}\label{eq:IF}
\begin{align}
&\maximize &&\E_0 \sum_{t=0}^\infty \beta^t u(c_t)\\
&\st && a_{t+1}=R(a_t-c_t)+y_{t+1},\label{eq:IFbudget}\\
&&& 0\le c_t\le a_t,\label{eq:IFborrowing}
\end{align}
\end{subequations}
where $u:\R_+\to \set{-\infty}\cup \R$ is the utility function, $\beta>0$ is the discount factor, $R>0$ is the gross risk-free rate,
$y_t\ge 0$ is income, $a_t$ is financial wealth at the beginning of period $t$ including current income, and the initial wealth $a_0>0$ is given. The constraint \eqref{eq:IFborrowing} implies that consumption must be nonnegative and the agent cannot borrow. The no borrowing assumption is without loss of generality as discussed in \cite*{ChamberlainWilson2000} and \cite*{LiStachurski2014}.

We impose standard assumptions on the utility function.

\begin{asmp}\label{asmp:utility}
The utility function is twice continuously differentiable on $\R_{++}=(0,\infty)$ and satisfies $u'>0$, $u''<0$, $u'(0)=\infty$, and $u'(\infty)=0$.\footnote{These assumptions are stronger than necessary. It suffices that $u$ is continuously differentiable on $(0,\infty)$, $u'>0$, $u'$ is strictly decreasing, and $u'(\infty)=0$. In particular, we do not need the twice differentiability and the Inada condition $u'(0)=\infty$, although the proof becomes more involved.}
\end{asmp}

Regarding the income process, we introduce the following assumption.

\begin{asmp}\label{asmp:boundedincome}
The income process $\set{y_t}$ takes the form $y_t=y(z_t)$, where $\set{z_t}$ is a Markov process on some set $Z$ and $\sup_{z\in Z}\Ex{y(z_{t+1})}{z_t=z}<\infty$.
\end{asmp}

Assumption \ref{asmp:boundedincome} is relatively weak because we have not specified the state space $Z$. The income process can be very general: for example, it can accommodate nonstationary life-cycle features. The only important assumption is that income has a (uniformly) finite conditional mean, which is natural in a stationary general equilibrium environment.\footnote{We use the uniform finiteness $\sup_{z\in Z}\Ex{y(z_{t+1})}{z_t=z}<\infty$ to apply Theorem \ref{thm:tailbound}(\ref{item:tailbound3}). \cite*{LiStachurski2014} prove the existence of a solution to the income fluctuation problem similar to \eqref{eq:IF} under a weaker finiteness assumption (see their Assumption 2.4).}

The reason why we assume that income is a function of a Markov process (not necessarily that income itself is Markovian) is for generality. For example, in the empirical literature on earnings dynamics, it is common to assume that income has a persistent-transitory decomposition
$$\log y_t=\xi_t+\eta_t,$$
where $\xi_t$ and $\eta_t$ are the persistent and transitory components, respectively \citep*{EjrnaesBrowning2014}. In this case $y_t=\exp(\xi_t+\eta_t)$ is a function of a Markov process $z_t=(\xi_t,\eta_t)$, but $y_t$ itself may not be Markovian. More generally, Assumption \ref{asmp:boundedincome} holds if $z_t=(\xi_t,\eta_t)$, $\set{\xi_t}$ follows a finite-state Markov chain, the distribution of $\eta_t$ depends only on $\xi_t$, and $\Ex{y(z_{t+1})}{\xi_t}<\infty$.

Due to strict concavity, the solution to the income fluctuation problem \eqref{eq:IF} is unique, if it exists. The following proposition shows that a solution exists under the ``impatience'' condition $\beta R<1$ and that it can be computed by policy function iteration. This result is essentially due to \citet*{LiStachurski2014} (see Appendix \ref{sec:politer} for details).

\begin{prop}\label{prop:politer}
Suppose Assumptions \ref{asmp:utility} and \ref{asmp:boundedincome} hold and $\beta R<1$. Then there exists a unique consumption policy function $c(a,z)$ that solves the income fluctuation problem \eqref{eq:IF}. Furthermore, we have $0<c(a,z)\le a$, $c$ is increasing in $a$, and $c(a,z)$ can be computed by policy function iteration.\footnote{To be precise, $c(a,z)$ is the limit obtained by iterating the map $K:\mathcal{C}\to\mathcal{C}$ starting from any $c_0\in \mathcal{C}$, where $\mathcal{C}$ is the set of candidate policy functions defined in Appendix \ref{sec:politer} and $(Kc)(a,z)$ is the value $t\in (0,a]$ that satisfies the Euler equation \eqref{eq:coleman}.}
\end{prop}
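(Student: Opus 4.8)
The plan is to recast the income fluctuation problem \eqref{eq:IF} in terms of its time-invariant Euler equation and obtain the consumption policy as the unique fixed point of the Coleman--Reffett operator $K$, essentially reproducing the argument of \citet*{LiStachurski2014} in the present setting. First I would pin down the candidate space $\mathcal C$ alluded to in the footnote: a set of continuous maps $(a,z)\mapsto c(a,z)$, nondecreasing in $a$, with $0<c(a,z)\le a$, and satisfying a growth restriction that keeps the conditional expectations appearing in the Euler equation finite. This is where Assumption \ref{asmp:boundedincome} enters: the uniform bound $\sup_z \Ex{y(z_{t+1})}{z_t=z}<\infty$, together with the concavity of $u$ and the cap $c\le a$, keeps $\Ex{u'(c(R(a-t)+y(z_{t+1}),z_{t+1}))}{z_t=z}$ finite for every feasible $t$. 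I would equip $\mathcal C$ with a complete metric adapted to the (possibly unbounded) curvature of $u$ --- for instance the weighted supremum metric in the transformed variable $u'\circ c$ used by \citet*{LiStachurski2014}.

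With the space fixed, define $(Kc)(a,z)$ to be the value $t\in(0,a]$ solving the Euler equation \eqref{eq:coleman}, which reads \[ u'(t)=\max\set{u'(a),\beta R\,\Ex{u'(c(R(a-t)+y(z_{t+1}),z_{t+1}))}{z_t=z}}. \] Existence, uniqueness and interiority of this $t$ are elementary: the left side is continuous and strictly decreasing in $t$ with $u'(0)=\infty$, while the right side is continuous, nondecreasing in $t$ (because $c$ is nondecreasing in wealth), and finite at every $t\in(0,a]$ by the previous step, so the graphs cross exactly once, and $u'(0)=\infty$ pushes the crossing into $(0,a)$ unless the borrowing constraint binds, in which case $t=a$. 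Next I would verify that $K$ maps $\mathcal C$ into itself: the bounds $0<Kc\le a$ hold by construction, continuity is routine under Assumption \ref{asmp:utility}, monotonicity in $a$ is preserved by a short comparative-statics argument on the Euler equation, and the growth restriction survives because $Kc\le a$ caps next-period wealth at $Ra+y(z_{t+1})$ while $\beta R<1$ keeps the right side of the Euler equation from amplifying.

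The crux is to show $K$ is a contraction on $\mathcal C$, and this is exactly where the impatience condition $\beta R<1$ is used. Differencing the defining equations for $Kc$ and $Kd$ and using monotonicity of $u'$ (and of the members of $\mathcal C$ in wealth, to absorb the mismatch in the wealth argument), the discrepancy $u'(Kc)-u'(Kd)$ is dominated --- up to the nonexpansive effect of the outer $\max\set{u'(a),\,\cdot\,}$ --- by $\beta R$ times a conditional average of $u'(c)-u'(d)$, which in the chosen metric yields a contraction modulus proportional to $\beta R<1$. I expect the main obstacle to be the bookkeeping that makes this estimate rigorous when $u$ and $y$ are unbounded: controlling the composition with the wealth transition $a\mapsto R(a-t)+y$ and ensuring the weighted metric dominates the relevant supremum. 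This is precisely what the growth restriction on $\mathcal C$ and the choice of metric are built for, and it is the ingredient I would draw from \citet*{LiStachurski2014} most directly. Granting the contraction, Banach's fixed-point theorem yields a unique $c\in\mathcal C$ with $Kc=c$; being a member of $\mathcal C$, it is increasing in $a$, and it is obtained as the limit of iterating $K$ from any $c_0\in\mathcal C$ --- that is, by policy function iteration.

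Finally I would confirm that this Euler-equation solution is in fact the unique optimal policy for \eqref{eq:IF}. Strict concavity of $u$ gives uniqueness of an optimal plan once existence is established, so it remains to show the plan generated by $c$ is optimal; for this I would invoke the standard verification argument for this concave problem --- checking that the Euler conditions hold along the induced path and that the relevant limiting (transversality) condition is satisfied --- or appeal directly to the optimality results of \citet*{LiStachurski2014}. The role of $\beta R<1$ here is exactly to guarantee that these limiting conditions hold along feasible paths. Uniqueness of the consumption policy then follows from uniqueness of the optimal plan together with uniqueness of the fixed point of $K$.
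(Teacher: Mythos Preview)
Your proposal is correct and follows essentially the same route as the paper: define the candidate space $\mathcal C$ with the growth condition $\norm{u'\circ c - u'(a)}<\infty$, equip it with the metric $\rho(c,d)=\norm{u'\circ c - u'\circ d}$, show $K$ is a self-map and a $\beta R$-contraction, and apply Banach's theorem (these are exactly Lemmas~\ref{lem:metric}--\ref{lem:contraction} in the appendix). One small point: the paper's $\mathcal C$ does not impose continuity, and the finiteness of the right side of \eqref{eq:coleman} for $t\in(0,a]$ comes from the growth restriction on $u'\circ c$ (giving an \emph{upper} bound on $u'(c)$) together with $R(a-t)\ge 0$, not from the cap $c\le a$ as you suggest.
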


\begin{proof}
Immediate from Lemmas \ref{lem:metric}--\ref{lem:contraction} and the Banach fixed point theorem.
\end{proof}

To obtain the bound \eqref{eq:accumulate} for the wealth accumulation process, we assume that the utility function exhibits asymptotically bounded relative risk aversion (BRRA).\footnote{This assumption is stronger than necessary. It suffices to assume condition \eqref{eq:consratio}, which is slightly weaker than BRRA (see Lemma \ref{lem:BRRA}). However, we maintain BRRA because it is more intuitive and weak enough for all practical purposes.}

\begin{asmp}\label{asmp:BRRA}
The utility function $u$ satisfies $\limsup_{x\to\infty} \gamma(x)<\infty$, where
\begin{equation}
\gamma(x):=-\frac{xu''(x)}{u'(x)}>0\label{eq:RRA}
\end{equation}
is the local relative risk aversion coefficient.
\end{asmp}

The widely used constant relative risk aversion (CRRA) utility clearly satisfies Assumption \ref{asmp:BRRA}. More generally, let $u$ be the hyperbolic absolute risk aversion (HARA) utility, so
$$-\frac{u''(x)}{u'(x)}=\frac{1}{ax+b}$$
for some $a,b\in\R$, where $x$ takes values such that $ax+b>0$. It is well-known that the general functional form of HARA utility is
\begin{equation}
u(x)=\begin{cases}
\frac{1}{a-1}(ax+b)^{1-1/a},&(a\neq 0,1)\\
-b\e^{-x/b},&(a=0, b>0)\\
\log(x+b),&(a=1)
\end{cases}\label{eq:HARA}
\end{equation}
up to an affine transformation. Since the relative risk aversion of $u$ is
$$\gamma(x)=-\frac{xu''(x)}{u'(x)}=\frac{x}{ax+b},$$
we obtain
$$\limsup_{x\to\infty} \gamma(x)=\frac{1}{a}<\infty$$
if $a>0$, so Assumption \ref{asmp:BRRA} holds. On the other hand, the constant absolute risk aversion (CARA) utility exhibits relative risk aversion $\gamma(x)=x/b\to\infty$ as $x\to\infty$, so it violates Assumption \ref{asmp:BRRA}.

\begin{rem}
\cite*{SchechtmanEscudero1977} assume the following ``asymptotic exponent'' assumption:
\begin{equation}
\exists \gamma=\lim_{x\to\infty}-\frac{\log u'(x)}{\log x}.\label{eq:asymexp}
\end{equation}
Under twice continuous differentiability, the condition \eqref{eq:asymexp} is stronger than BRRA. To see this, by l'H\^opital's rule we have
$$\gamma=\lim_{x\to\infty}-\frac{\log u'(x)}{\log x}=\lim_{x\to\infty}-\frac{(\log u'(x))'}{(\log x)'}=\lim_{x\to\infty}-\frac{xu''(x)}{u'(x)}=\lim_{x\to\infty}\gamma(x),$$
so \eqref{eq:asymexp} implies that $u$ is asymptotically CRRA (in particular, BRRA). \citet*[Proposition 4]{Aiyagari1993WP} and \citet*[Proposition 3]{AchdouHanLasryLionsMollWP} assume that $u$ is BRRA. On the other hand, BRRA is stronger than the assumption used in \cite*{Rabault2002}, which is that the absolute risk aversion coefficient approaches 0:
\begin{equation}
\lim_{x\to\infty}-\frac{u''(x)}{u'(x)}=0.\label{eq:ARA0}
\end{equation}
In fact, if $u$ is BRRA, then
$$\lim_{x\to\infty}-\frac{u''(x)}{u'(x)}=\lim_{x\to\infty}\frac{\gamma(x)}{x}=0$$
so \eqref{eq:ARA0} holds, but the converse is not true. (As a counterexample, take $u$ such that $-u''(x)/u'(x)=x^{-\nu}$ for some $\nu\in (0,1)$.)
\end{rem}

The following proposition shows that under the impatience condition $\beta R<1$, agents uniformly consume more than the interest income, which is related to the permanent income hypothesis.\footnote{Interestingly, \cite*{Wang2003} and \cite*{Toda2017JEDC} show that the permanent income hypothesis holds in general equilibrium when the utility function is CARA, which is ruled out by Assumption \ref{asmp:BRRA}.} We use this result to bound the wealth accumulation process from above.

\begin{prop}\label{prop:PIH}
Suppose Assumptions \ref{asmp:utility}--\ref{asmp:BRRA} hold and $1\le R<1/\beta$. Let 
$$\bar{\gamma}=\limsup_{x\to\infty}-\frac{xu''(x)}{u'(x)}\in [0,\infty)$$
be the asymptotic relative risk aversion coefficient and $c(a,z)$ be the optimal consumption rule for the income fluctuation problem established in Proposition \ref{prop:politer}. 
Then for all $m\in (1-1/R,1-\beta^{1/\bar{\gamma}}R^{1/\bar{\gamma}-1})$, there exists an asset level $A\ge 0$ such that, for all $a \ge A$ and $z \in Z$, we have
\begin{equation}
c(a,z)\ge ma.\label{eq:PIH}
\end{equation}
\end{prop}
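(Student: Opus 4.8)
The plan is to recast \eqref{eq:PIH} as a lower bound on the asymptotic consumption--wealth ratio and then bootstrap it, via the Euler equation, from a crude positive lower bound all the way up to the perfect-foresight marginal propensity to consume $1-\beta^{1/\bar\gamma}R^{1/\bar\gamma-1}$ (read as its limit $1$ when $\bar\gamma=0$, using $\beta^{1/\bar\gamma}R^{1/\bar\gamma-1}=R^{-1}(\beta R)^{1/\bar\gamma}\to 0$ since $\beta R<1$; the interval in the statement is nonempty for the same reason). Set $\mu:=\liminf_{a\to\infty}\inf_{z\in Z}c(a,z)/a$, which lies in $[0,1]$ because $c(a,z)\le a$. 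Since $c(\cdot,z)$ is increasing (Proposition~\ref{prop:politer}), it suffices to show $\mu\ge 1-\beta^{1/\bar\gamma}R^{1/\bar\gamma-1}$: any $m$ in the stated interval is then strictly below $\mu$, so $\inf_z c(a,z)/a\ge m$ for all $a$ past some threshold $A\ge 0$, which is \eqref{eq:PIH}.

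Two facts do the work. First, the optimal policy of Proposition~\ref{prop:politer} satisfies the Euler relation $u'(c(a,z))\ge \beta R\,\Ex{u'(c(a',z'))}{z}$, with equality whenever $c(a,z)<a$, where $a'=R(a-c(a,z))+y_{t+1}$ and $\Ex{\cdot}{z}$ conditions on the current exogenous state being $z$. Second, Assumption~\ref{asmp:BRRA} gives, for each $\gamma_0>\bar\gamma$, a level $x_0$ with $-xu''(x)/u'(x)\le\gamma_0$ for $x\ge x_0$; integrating $(\log u')'=u''/u'\ge -\gamma_0/x$ yields the elasticity bound $u'(x)/u'(y)\le (y/x)^{\gamma_0}$ for $x_0\le x\le y$. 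Because $g(\gamma):=1-R^{-1}(\beta R)^{1/\gamma}$ is continuous on $(0,\infty)$ with $g(\gamma_0)\to 1-\beta^{1/\bar\gamma}R^{1/\bar\gamma-1}$ as $\gamma_0\downarrow\bar\gamma$, it is enough to prove $\mu\ge g(\gamma_0)$ for every fixed $\gamma_0>\bar\gamma$.

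Now fix $\gamma_0>\bar\gamma$, suppose for contradiction that $\mu<g(\gamma_0)$, and grant the preliminary claim $\mu>0$ (discussed below). Choose $m'\in(\mu,g(\gamma_0))$ with $m'<1$. By definition of $\mu$ there is a sequence $a\to\infty$ with companion states $z$ such that $c(a,z)<m'a<a$; the Euler relation then holds with equality, and $a'=R(a-c(a,z))+y_{t+1}>R(1-m')a\to\infty$ almost surely. Hence for any $\varepsilon\in(0,\mu)$ and all large $a$ we have $c(a',z')\ge(\mu-\varepsilon)a'>(\mu-\varepsilon)R(1-m')a$, while $m'a$ and $(\mu-\varepsilon)R(1-m')a$ both exceed $x_0$. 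Since $u'$ is decreasing, $u'(m'a)<u'(c(a,z))=\beta R\,\Ex{u'(c(a',z'))}{z}\le\beta R\,u'\!\big((\mu-\varepsilon)R(1-m')a\big)$; thus $u'\big((\mu-\varepsilon)R(1-m')a\big)/u'(m'a)>1/(\beta R)>1$, which forces $(\mu-\varepsilon)R(1-m')<m'$, so the elasticity bound applies and gives $1/(\beta R)<\big(m'/((\mu-\varepsilon)R(1-m'))\big)^{\gamma_0}$, i.e.\ $(\mu-\varepsilon)R(1-m')/m'<(\beta R)^{1/\gamma_0}$. Letting $\varepsilon\downarrow 0$, then $m'\downarrow\mu$, and cancelling $\mu>0$ yields $R(1-\mu)\le(\beta R)^{1/\gamma_0}$, i.e.\ $\mu\ge 1-R^{-1}(\beta R)^{1/\gamma_0}=g(\gamma_0)$ --- contradicting $\mu<g(\gamma_0)$.

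The one genuinely substantive point, and the hard part, is the preliminary claim $\mu>0$: consumption must not be asymptotically negligible relative to wealth. This is exactly where the impatience condition $\beta R<1$ is indispensable --- an agent who saved almost all of an arbitrarily large wealth stock would let wealth diverge along the optimal path, which is incompatible with optimality when $\beta R<1$ (cf.\ \citet*{ChamberlainWilson2000}); equivalently, one invokes the no-explosion/geometric-drift property of optimal wealth accumulation that underlies Proposition~\ref{prop:politer}. Making this precise --- e.g.\ by iterating the Euler equality to $u'(c_0)=(\beta R)^T\E_0[u'(c_T)]$ along a path whose consumption ratio stays near $0$ and extracting a contradiction from the vanishing factor $(\beta R)^T$ --- is the crux; once $\mu>0$ is in hand, the rest is bookkeeping with the Euler equation and the BRRA elasticity bound, and the monotonicity of $c(\cdot,z)$ upgrades the $\liminf$ conclusion to the uniform threshold $A$ demanded in \eqref{eq:PIH} (and hence to the AR(1) upper bound \eqref{eq:accumulate} with $\rho=R(1-m)<1$).
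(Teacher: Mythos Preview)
Your bootstrapping step---passing from $\mu>0$ to $\mu\ge g(\gamma_0)$ via the Euler equality and the BRRA elasticity bound $u'(x)/u'(y)\le(y/x)^{\gamma_0}$---is correct and clean, and the reduction of \eqref{eq:PIH} to the single inequality $\mu\ge 1-\beta^{1/\bar\gamma}R^{1/\bar\gamma-1}$ is valid. But, as you yourself flag, the whole argument rests on the preliminary claim $\mu>0$, and that claim is not proved. The sketched routes do not close the gap. Iterating the Euler equality to $u'(c_0)=(\beta R)^T\E_0[u'(c_T)]$ gives $\E_0[u'(c_T)]\to\infty$, which forces $c_T$ small with substantial probability; that is \emph{consistent} with $\mu=0$, not a contradiction to it. You cannot assume the consumption ratio stays near zero along the whole forward path: $\mu=0$ supplies only isolated states $(a_n,z_n)$ with small ratio, while the Euler equation averages over \emph{all} successors, some of which may have large ratio. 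The appeal to \cite*{ChamberlainWilson2000} is misplaced (their divergence results concern the patient case $\beta R\ge 1$), and the contraction behind Proposition~\ref{prop:politer} is on marginal utilities in sup norm, which says nothing about $\liminf_{a\to\infty}\inf_z c(a,z)/a$.

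The paper avoids this base-case difficulty by working constructively with the Coleman operator rather than analytically with the Euler equation. It first reduces to the zero-income problem via the comparison $c(a,z)\ge c_0(a)$ (Lemma~\ref{lem:clb}, proved by showing $Kc_0\ge c_0$ for the with-income operator). Then, for the zero-income operator, it exhibits an explicit candidate policy equal to $ma$ for $a\ge A$ (and a damped copy of $c_0$ below $A$), checks directly that one application of $K$ does not decrease it, and uses monotonicity of $K$ together with convergence of policy iteration to conclude $c_0(a)\ge ma$ for $a\ge A$. The linear lower bound is thus \emph{built into} the candidate from the outset and preserved under iteration; no separate ``$\mu>0$'' lemma is ever needed. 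Your elasticity inequality is essentially the content of the paper's Lemma~\ref{lem:BRRA} and enters at the analogous place (verifying $Kc\ge c$ at large $a$), but the surrounding architecture---policy iteration from a designed sub-fixed-point versus bootstrapping a $\liminf$---is genuinely different, and it is precisely this architectural choice that lets the paper bypass the step you could not complete.
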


The reason we need Assumption \ref{asmp:BRRA} is because we are working with arbitrary (potentially unbounded) income processes. Under Assumption \ref{asmp:utility}, $\beta R<1$, and \eqref{eq:ARA0}, \citet*[Lemma C.1]{Rabault2002} shows that the next period's wealth in the consumption-saving problem becomes infinitely smaller than the current wealth $a$ as $a\to\infty$, which implies that $a$ is bounded. \citet*[Proposition 2]{Acikgoz2018} proves a similar result by relaxing the limit in \eqref{eq:ARA0} to $\liminf$. Although such arguments are enough for obtaining an upper bound for wealth when income is bounded, for the unbounded case we need $m>1-1/R$, for which a stronger assumption such as BRRA is necessary. Proposition \ref{prop:PIH} not only tells us that we can take some such number $m$, but it also gives an explicit choice: any number $m\in (1-1/R,1-\beta^{1/\bar{\gamma}}R^{1/\bar{\gamma}-1})$ will do, where the bounds depend only on the discount factor $\beta$, the gross risk-free rate $R$, and the asymptotic relative risk aversion coefficient $\bar{\gamma}$.

Using Proposition \ref{prop:PIH}, we can show that the wealth dynamics arising from the income fluctuation problem has the contraction property shown in \eqref{eq:accumulate} under the impatience condition $\beta R<1$.

\begin{prop}\label{prop:IF}
Suppose Assumptions \ref{asmp:utility}--\ref{asmp:BRRA} hold and $\beta R<1$. Let $\set{a_t}$ be the wealth arising from the solution to the income fluctuation problem \eqref{eq:IF}. Then the contraction property \eqref{eq:accumulate} holds for sufficiently high wealth level, and consequently the following statements are true:
\begin{enumerate}
\item If $\set{y_t}$ is uniformly light-tailed, then so is $\set{a_t}$.
\item If $\set{y_t}$ is uniformly heavy-tailed with polynomial decay rate $\alpha$, then $\set{a_t}$ has polynomial decay rate $\alpha'\ge \alpha$.
\end{enumerate}
Furthermore, the coefficient $\rho\in (0,1)$ in \eqref{eq:accumulate} can be chosen as follows:
\begin{itemize}
\item If $R<1$, then $\rho=R$.
\item If $R\ge 1$, then $\rho$ is any number in $((\beta R)^{1/\bar{\gamma}},1)$, where $\bar{\gamma}$ is the asymptotic relative risk aversion coefficient defined in Proposition \ref{prop:PIH}.
\end{itemize}
\end{prop}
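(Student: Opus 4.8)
The plan is to derive the AR(1) bound \eqref{eq:accumulate} directly from the budget constraint \eqref{eq:IFbudget} and a linear lower bound on consumption, and then quote Theorem \ref{thm:tailbound}. Write $c_t=c(a_t,z_t)$ for the optimal policy of Proposition \ref{prop:politer}. Since $0<c_t\le a_t$ and $R>0$, the wealth process $\set{a_t}$ is nonnegative with $a_0>0$ given, and $\set{y_t}$ is nonnegative by construction, so Theorem \ref{thm:tailbound} will apply with $X_t=a_t$ and $Y_t=y_t$ once a suitable $\phi$ is exhibited. If $R<1$ this is immediate: from $c_t\ge 0$ we get $a_{t+1}=R(a_t-c_t)+y_{t+1}\le Ra_t+y_{t+1}$ for all $t$, so \eqref{eq:ub1} holds with $\phi(x)=Rx$ and $\rho=R\in(0,1)$.

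If $R\ge 1$, then $1\le R<1/\beta$, so Proposition \ref{prop:PIH} applies. Fix $m$ in the interval $(1-1/R,\,1-\beta^{1/\bar{\gamma}}R^{1/\bar{\gamma}-1})$; this interval is nonempty precisely because $\beta R<1$ (its right endpoint exceeds its left iff $(\beta R)^{1/\bar{\gamma}}<1$). Proposition \ref{prop:PIH} supplies $A\ge 0$ with $c(a,z)\ge ma$ for all $a\ge A$ and $z\in Z$, and substituting into \eqref{eq:IFbudget} gives $a_{t+1}\le R(1-m)a_t+y_{t+1}$ when $a_t\ge A$, while $a_{t+1}\le Ra_t+y_{t+1}$ holds unconditionally. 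Hence \eqref{eq:ub1} holds with
\[
\phi(x)=\begin{cases}Rx,&0\le x<A,\\ R(1-m)x,&x\ge A,\end{cases}
\]
a function that is bounded on bounded sets and satisfies $\rho:=\limsup_{x\to\infty}\phi(x)/x=R(1-m)$. Since $m\mapsto R(1-m)$ is a continuous strictly decreasing bijection of $(1-1/R,\,1-\beta^{1/\bar{\gamma}}R^{1/\bar{\gamma}-1})$ onto $((\beta R)^{1/\bar{\gamma}},1)$ (reading $(\beta R)^{1/\bar{\gamma}}=0$ when $\bar{\gamma}=0$), this accounts for the stated range of $\rho$ and, together with the case $R<1$, establishes \eqref{eq:accumulate}.

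It remains to read off the tail conclusions. In both cases \eqref{eq:accumulate} holds with $\rho\in(0,1)$ and $\phi$ bounded on bounded sets, so statement 1 follows from Theorem \ref{thm:tailbound}(\ref{item:tailbound2}) (which in fact gives an exponential decay rate of $\set{a_t}$ at least $(1-\rho)\lambda$), and statement 2 follows from Theorem \ref{thm:tailbound}(\ref{item:tailbound3}), whose hypothesis $\sup_t\E[y_t]<\infty$ holds by Assumption \ref{asmp:boundedincome} and the tower property: $\E[y_t]=\E[\Ex{y(z_t)}{z_{t-1}}]\le\sup_{z\in Z}\Ex{y(z_t)}{z_{t-1}=z}<\infty$.

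This argument is mostly an assembly of Propositions \ref{prop:politer}--\ref{prop:PIH} and Theorem \ref{thm:tailbound}; the one place demanding care is the construction of $\phi$ so that it is simultaneously bounded on bounded sets and has asymptotic slope $R(1-m)$, handled by the piecewise definition above, together with the bookkeeping that shows the admissible coefficients $\rho=R(1-m)$ fill out exactly $((\beta R)^{1/\bar{\gamma}},1)$ --- the step where the impatience condition $\beta R<1$ re-enters. The real content of the proposition already resides in Proposition \ref{prop:PIH}, so no step here is a genuine obstacle.
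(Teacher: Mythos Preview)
Your proof is correct and follows essentially the same route as the paper: split on $R<1$ versus $R\ge 1$, in the latter case invoke Proposition \ref{prop:PIH} to get $c(a,z)\ge ma$ for $a\ge A$, build a $\phi$ that is bounded on bounded sets with asymptotic slope $R(1-m)$, and apply Theorem \ref{thm:tailbound}. The only cosmetic difference is your piecewise $\phi$ versus the paper's $\phi(x)=\max\set{\rho x, RA}$; both satisfy the hypotheses of Theorem \ref{thm:tailbound}, and your explicit verification of $\sup_t\E[y_t]<\infty$ via Assumption \ref{asmp:boundedincome} is a nice touch the paper leaves implicit.
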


\begin{proof}
If $R<1$, by the budget constraint \eqref{eq:IFbudget} we obtain
\begin{equation}
a_{t+1}\le Ra_t+y_{t+1}=\rho a_t+y_{t+1},\label{eq:aub1}
\end{equation}
where $\rho=R<1$. Hence \eqref{eq:accumulate} holds. If $1\le R<1/\beta$, by the budget constraint and Proposition \ref{prop:PIH}, we can take any $m\in (1-1/R,1-\beta^{1/\bar{\gamma}}R^{1/\bar{\gamma}-1})$ and some $A\ge 0$ such that
\begin{equation}
a_{t+1}\le R(a_t-ma_t)+y_{t+1}=\rho a_t+y_{t+1}\label{eq:aub2}
\end{equation}
for $\rho=R(1-m)\in ((\beta R)^{1/\bar{\gamma}},1)$ whenever $a_t\ge A$. Once again, the bound in \eqref{eq:accumulate} holds.

By \eqref{eq:aub1} and \eqref{eq:aub2}, letting $\phi(x)=\max\set{\rho x, RA}$, we obtain
$$a_{t+1}\le \phi(a_t)+y_{t+1}.$$
Therefore the claims follow from Theorem \ref{thm:tailbound}.
\end{proof}

The intuition for this result is as follows. Under the impatience condition $\beta R<1$, by Proposition \ref{prop:PIH}, agents uniformly consume more than the interest income at high wealth level. Since agents draw down their assets, wealth behaves similarly to income.

We now apply Proposition \ref{prop:IF} to show that in canonical heterogeneous-agent models, the wealth distribution cannot have a heavier tail than income, which is our main result. We formally define a \BHA model as follows.
\begin{defn}
A \emph{\BHA model} is any dynamic general equilibrium model such that ex ante identical, infinitely-lived agents solve an income fluctuation problem \eqref{eq:IF}.
\end{defn}

\begin{rem}
By requiring that the gross risk-free rate $R$ is constant over time in \eqref{eq:IFbudget}, we are excluding models with aggregate shocks. Thus, we are focusing on a stationary environment at the aggregate level, although the individual income processes may be nonstationary according to Assumption \ref{asmp:boundedincome}. We conjecture that our results extend to models with aggregate risk, although it is beyond the scope of this paper.
\end{rem}

While the preceding argument assumes the impatience condition $\beta R<1$, in general equilibrium models this condition is necessarily satisfied. To prove this impatience condition, we introduce an additional assumption on the income process.

\begin{asmp}\label{asmp:CW}
Let $\set{y_t}_{t=0}^\infty$ be the income process. There is an $\epsilon>0$ such that for any $x\in \R$, we have
\begin{equation}
\Pr\left(x\le \sum_{s=0}^\infty \beta^sy_{t+s}\le x+\epsilon \,|\, z^t\right)<1-\epsilon\label{eq:Ugamma}
\end{equation}
for all $t\ge 0$ and history $z^t=(z_0,\dots,z_t)$.
\end{asmp}
Assumption \ref{asmp:CW} is identical to condition (U$\gamma$) in \cite*{ChamberlainWilson2000}. Note that because $y_t\ge 0$ by assumption, the discounted sum of future income $\sum_{s=0}^\infty \beta^sy_{t+s}$ exists in $[0,\infty]$. Condition \eqref{eq:Ugamma} says that the conditional distribution of this discounted sum is not concentrated on a small enough interval. This condition is relatively weak and it roughly says that the discounted sum of income is stochastic. It holds, for example, if income is stationary and stochastic (nondeterministic).

\begin{thm}[Impossibility of heavy/heavier tails]\label{thm:impossible}
Consider a \BHA model such that Assumptions \ref{asmp:utility}--\ref{asmp:BRRA} hold. Suppose that an equilibrium with a wealth distribution with a finite mean exists and let $R>0$ be the equilibrium gross risk-free rate. If $R\neq 1/\beta$, then the following statements are true:
\begin{enumerate}
\item If the income process is light-tailed, then so is the wealth distribution.
\item If the income process is heavy-tailed with polynomial decay rate $\alpha$, then the wealth distribution has a polynomial decay rate $\alpha'\ge \alpha$.
\end{enumerate}
If in addition Assumption \ref{asmp:CW} holds, the condition $R\neq 1/\beta$ can be dropped.
\end{thm}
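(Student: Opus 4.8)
The plan is to reduce the theorem to Proposition \ref{prop:IF}, which already shows that the individual wealth trajectory inherits the tail behaviour of income \emph{once the impatience condition $\beta R<1$ holds}. So two things remain: (a) show that at any equilibrium with a finite-mean wealth distribution one must have $\beta R<1$, and (b) transfer the resulting tail bounds from the individual wealth process to the cross-sectional (equilibrium) wealth distribution.

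For (a), I would suppose $\beta R\ge 1$ and derive a contradiction. Whether or not the borrowing constraint \eqref{eq:IFborrowing} binds, the optimal policy satisfies the Euler inequality $u'(c_t)\ge \beta R\,\Ex{u'(c_{t+1})}{\mathcal F_t}$, so $V_t:=(\beta R)^t u'(c_t)\ge 0$ is a supermartingale and converges almost surely to a finite limit. When $\beta R>1$ this forces $u'(c_t)=(\beta R)^{-t}V_t\to 0$, hence $c_t\to\infty$ a.s.\ since $u'$ is strictly decreasing with $u'(\infty)=0$; when $\beta R=1$ the same conclusion holds by \cite*{ChamberlainWilson2000}, whose condition (U$\gamma$) is exactly Assumption \ref{asmp:CW}. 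Either way $a_t\ge c_t\to\infty$ a.s.\ from every fixed initial state. But if $\mu$ is an invariant wealth distribution then $\mu((M,\infty))=\int P(a_t>M\mid a_0=x)\,\mu(dx)$ for every $t$, and letting $t\to\infty$ (dominated convergence, using $a_t\to\infty$) gives $\mu((M,\infty))=1$ for every $M$, which is impossible. Hence no (finite-mean) invariant wealth distribution exists -- a contradiction. This excludes $\beta R=1$ whenever Assumption \ref{asmp:CW} is imposed, and excludes the strict case $\beta R>1$ with no extra assumption; when Assumption \ref{asmp:CW} is not imposed, the hypothesis $R\ne 1/\beta$ rules out $\beta R=1$ by fiat. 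In all remaining cases $\beta R<1$.

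For (b), with $\beta R<1$ in hand, Assumptions \ref{asmp:utility}--\ref{asmp:BRRA} place us in the setting of Proposition \ref{prop:IF}, so the individual wealth process obeys $a_{t+1}\le\rho a_t+RA+y_{t+1}$ for all $t$, with $\rho\in(0,1)$ and $A\ge 0$ as given there. (In a stationary equilibrium the single-variable notions of light-/heavy-tailedness for income used in the theorem coincide with the uniform notions of Proposition \ref{prop:IF}, since all $y_t$ share the stationary income law.) Working with the process in stationarity, so that $a_t$ has the equilibrium law $\mu$ on a two-sided time index, I would iterate the bound backwards: $a_t\le\rho^n a_{t-n}+\sum_{i=0}^{n-1}\rho^i(RA+y_{t-i})$ for every $n$. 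Since $E_\mu[a]<\infty$ we have $\rho^n a_{t-n}\to 0$ in $L^1$, while the monotone nonnegative partial sums converge a.s.\ to $Z_t:=\tfrac{RA}{1-\rho}+\sum_{i\ge 0}\rho^i y_{t-i}$, finite a.s.\ because $E[y]<\infty$; hence $a_t\le Z_t$ a.s. A generalized-H\"older argument identical to the one behind Theorem \ref{thm:tailbound}(\ref{item:tailbound2}) then gives $M_{Z_t}(s)<\infty$ for $0<s<(1-\rho)\lambda$ when income is light-tailed with decay rate $\lambda$, while Minkowski's inequality (for $s>1$) together with subadditivity of $x\mapsto x^s$ (for $s\le 1$) gives $\E[Z_t^s]<\infty$ for every $s<\alpha$ when income is heavy-tailed with polynomial decay rate $\alpha$. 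Since $0\le a_t\le Z_t$ a.s., $\mu$ is light-tailed in the first case and has polynomial decay rate $\ge\alpha$ in the second. (Alternatively, starting from a deterministic $a_0$, one invokes Proposition \ref{prop:IF} to get $\sup_t M_{a_t}(s)<\infty$ and passes to $\mu$ via convergence to the stationary distribution and lower semicontinuity of the moment generating function under weak limits.)

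The genuinely new work is the bookkeeping at these two junctions, since the analytical heart -- the AR(1) bound \eqref{eq:accumulate} and the tail inheritance it produces -- is already contained in Propositions \ref{prop:PIH}--\ref{prop:IF} and Theorem \ref{thm:tailbound}. I expect the main obstacle to be the borderline case $\beta R=1$ in part (a): the elementary supermartingale argument does not settle it, so one genuinely needs \cite*{ChamberlainWilson2000} via Assumption \ref{asmp:CW}, which is precisely why the theorem must either assume $R\ne 1/\beta$ or impose that assumption. The secondary subtlety is the transfer in (b) from the individual wealth trajectory -- on which Proposition \ref{prop:IF} is stated with a \emph{deterministic} initial condition -- to the stationary cross-sectional distribution; the finite-mean hypothesis on wealth is exactly what makes the boundary term $\rho^n a_{t-n}$ vanish in the backward iteration, and it is used nowhere else.
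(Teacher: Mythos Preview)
Your argument is correct and, for part (a), essentially identical to the paper's: the supermartingale $V_t=(\beta R)^t u'(c_t)$ forces $c_t\to\infty$ when $\beta R>1$, contradicting finite-mean equilibrium wealth, while the borderline $\beta R=1$ is handled either by the hypothesis $R\ne 1/\beta$ or by invoking \cite*{ChamberlainWilson2000} under Assumption~\ref{asmp:CW}. The paper's proof is verbatim this, phrasing the contradiction as ``aggregate consumption diverges, which is impossible in a model with a wealth distribution with a finite mean'' rather than your invariant-measure formulation, but the content is the same.

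Where you differ is part (b). The paper simply writes ``the conclusion follows from Proposition~\ref{prop:IF}'' and stops, implicitly identifying the equilibrium wealth distribution with the law of the process $\{a_t\}$ treated there. You are more careful: you notice that Proposition~\ref{prop:IF} (via Theorem~\ref{thm:tailbound}) is stated for a deterministic initial $a_0$, and you supply the missing bridge to the stationary cross-sectional law by a backward-iteration argument, using the finite-mean hypothesis to kill the boundary term $\rho^n a_{t-n}$. This is a genuine refinement---the paper leaves this step implicit---and your use of the finite-mean assumption here is exactly right (one small point: $L^1$ convergence of $\rho^n a_{t-n}$ to $0$ gives a.s.\ convergence along a subsequence, which is all you need to conclude $a_t\le Z_t$ a.s.). Your alternative route via weak convergence and lower semicontinuity of moment generating functions would also work. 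Either way, what you add buys rigor at the cost of a few extra lines; the paper trades that rigor for brevity.
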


\begin{rem}
Although we present Theorem \ref{thm:impossible} as if the economy consists of ex ante identical agents, the result trivially generalizes to models with multiple agent types with heterogeneous preferences and income processes as long as each type satisfies the assumptions (in particular, the infinite horizon setting).
\end{rem}

The proof is an immediate consequence of Proposition \ref{prop:IF} combined with the convergence results in \cite*{ChamberlainWilson2000}. Theorem \ref{thm:impossible} is valuable since it places few assumptions. The only important assumption is that an equilibrium exists, which gives us the impatience condition $\beta R<1$ to apply Proposition \ref{prop:IF}. This assumption is minimal, for it is vacuous to study the wealth distribution unless an equilibrium exists. Regarding the income shocks, persistence and/or stationarity are irrelevant.

Theorem \ref{thm:impossible} has two important implications on the wealth distribution.
\begin{inparaenum}[(i)]
\item It is impossible to generate heavy-tailed wealth distributions from light-tailed income shocks.
\item If the income shock has a Pareto tail with exponent $\alpha$, the wealth distribution can have a Pareto tail, but its tail exponent $\alpha'$ can never fall below that of income shocks. Noting that smaller tail exponent means heavier tail, it follows that the wealth distribution cannot have a heavier tail than income.
\end{inparaenum}

Below, we discuss several applications of Theorem \ref{thm:impossible}.

\begin{exmp}
\cite*{aiyagari1994} uses the CRRA utility and a finite-state Markov chain for income. Hence by Theorem \ref{thm:impossible}, the wealth distribution is light-tailed. (In fact, it is bounded by applying Theorem \ref{thm:tailbound}(\ref{item:tailbound1}).\footnote{The boundedness result for the case with finite-state Markov chain is already known from \citet*[Proposition 4]{Aiyagari1993WP} and \citet*[Proposition 4]{Acikgoz2018}. The case when the income process is unbounded but light-tailed is new.})
\end{exmp}

\begin{exmp}
In \cite*{Quadrini2000}, even though there is idiosyncratic investment risk (stochastic returns), agents are restricted to only three levels of investment $\set{k_1,k_2,k_3}$ (see p.~25). Therefore the only investment vehicle that allows for unbounded investment is the risk-free asset, and the budget constraint reduces to one with stochastic income only. Since utility is CRRA, the wealth distribution is light-tailed.
\end{exmp}

\begin{exmp}
In \cite*{CastanedaDiazGimenezRiosRull2003}, the utility function is additively separable between consumption and leisure and the consumption part is CRRA. Since shocks follow a finite-state Markov chain, the wealth distribution is light-tailed.
\end{exmp}

\begin{exmp}
The budget constraint in \cite*{CagettiDeNardi2006} is
$$a'=(1-\delta)k+\theta k^\nu-(1+r)(k-a)-c,$$
where $a$ is risk-free asset and $k$ is capital (see their Equation (4) on p.~846). Here $\nu\in (0,1)$ is a parameter, $r$ is the net interest rate, $\delta\in (0,1)$ is capital depreciation rate, and $\theta$ is a random variable for productivity that has bounded support. Although there is some restriction on $k$, by ignoring the constraint and maximizing, we can bound the right-hand side by
$(1+r)a+Y-c$, where $Y$ is some random variable with bounded support. Since utility is CRRA, the wealth distribution is light-tailed.
\end{exmp}


\section{Possibility results}\label{sec:possibility}

Our Theorem \ref{thm:impossible} states that in canonical \BHA models in which
\begin{inparaenum}[(i)]
\item\label{item:possible.1} agents are infinitely-lived,
\item\label{item:possible.2} agents have constant discount factors, and
\item\label{item:possible.3} the only financial asset is risk-free,
\end{inparaenum}
the wealth distribution necessarily inherits the tail property of income. Thus, it is necessary to go beyond standard models to explain the empirical fact that wealth is heavier-tailed than income. A natural question is whether relaxing any of these assumptions can generate heavy-tailed wealth distributions from light-tailed income shocks. The answer is yes.

First, consider relaxing condition \eqref{item:possible.3} (saving is risk-free). In this case the return to investment is stochastic, and it is well-known that the wealth distribution can be heavy-tailed. (See, for example, \cite*{nirei-souma2007}, \cite*{benhabib-bisin-zhu2011,BenhabibBisinZhu2015}, \cite*{Toda2014JET}, \cite*{CaoLuo2017}, and the references in \cite*{BenhabibBisin2018}.) 
Next, consider relaxing condition \eqref{item:possible.2} (constant discounting). \cite*{krusell-smith1998} have numerically shown that when agents have random discount factors (\ie, they are more patient in some states than in others), the wealth distribution can be more skewed than the income distribution. Recently, \cite*{toda-discount} has theoretically proved that the wealth distribution can have a Pareto tail under random discounting even if there is no income risk, although the numerical value of the Pareto exponent is highly sensitive to the calibration of the discount factor process. Finally, consider relaxing condition \eqref{item:possible.1} (infinite horizon). \cite*{CarrollSlacalekTokuokaWhite2017} and \cite*{McKay2017} numerically solve heterogeneous-agent quantitative models with several agent types with different discount factors to generate skewed wealth distributions.\footnote{To our knowledge, \cite*{Samwick1998} is the first paper that uses a model with constant but heterogeneous discount factors to explain the wealth distribution.} 
Below, we provide a simplified version of such heterogeneous discount factor models 
and theoretically show that its wealth distribution is Pareto-tailed.\footnote{This example is essentially just a discrete-time, micro-founded, general equilibrium version of \cite*{WoldWhittle1957}, so there is nothing surprising in the results. Nevertheless, we think that it has a pedagogical value since \cite*{CarrollSlacalekTokuokaWhite2017} and \cite*{McKay2017} do not theoretically characterize the wealth distribution and the assumptions in \cite*{WoldWhittle1957} are endogenously satisfied in equilibrium.}

Consider an infinite-horizon endowment economy consisting of several agent types indexed by $j=1,\dots,J$. Let $\pi_j\in (0,1)$ be the fraction of type $j$ agents, where $\sum_{j=1}^J\pi_j=1$. Type $j$ agents are born and die with probability $p_j\in (0,1)$ every period, are endowed with constant endowment $y_j>0$ every period, and have CRRA utility
\begin{equation}
\E_0\sum_{t=0}^\infty \tilde{\beta}_j^t\frac{c_t^{1-\gamma_j}}{1-\gamma_j},\label{eq:CRRA}
\end{equation}
where $\tilde{\beta}_j=\beta_j(1-p_j)$ is the effective discount factor ($\beta_j\in (0,1)$ is the discount factor) and $\gamma_j>0$ is the relative risk aversion coefficient. There is a risk-free asset in zero net supply, and let $R>0$ be the gross risk-free rate determined in equilibrium. We assume that there is a perfectly competitive annuity market and let $\tilde{R}_j=\frac{R}{1-p_j}$ be the effective gross risk-free rate faced by type $j$ agents. We can show that a type $j$ agent maximizes utility \eqref{eq:CRRA} subject to the budget constraint
\begin{equation}
w_{t+1}=\tilde{R}_j(w_t-c_t),\label{eq:budget}
\end{equation}
where $w_t>0$ is wealth (financial wealth plus the annuity value of future income; see \cite*{toda-discount} for a rigorous discussion). A stationary equilibrium consists of a gross risk-free rate $R>0$, optimal consumption rules, and wealth distributions such that
\begin{inparaenum}[(i)]
\item agents optimize,
\item the commodity and risk-free asset markets clear, and
\item the wealth distributions are invariant.
\end{inparaenum}
The following theorem shows that a stationary equilibrium always exists and the wealth distribution exhibits a Pareto tail if and only if discount factors are heterogeneous across agent types.

\begin{thm}\label{thm:Pareto}
A stationary equilibrium exists. Letting $R>0$ be the equilibrium gross risk-free rate, the following statements are true.
\begin{enumerate}
\item If $\set{\beta_j}_{j=1}^J$ take at least two distinct values, then $\beta_jR>1$ for at least one $j$ and the stationary wealth distribution has a Pareto upper tail with exponent
\begin{equation}
\alpha=\min_{j:\beta_jR>1} \left[-\gamma_j\frac{\log (1-p_j)}{\log (\beta_jR)}\right]>1.\label{eq:zeta}
\end{equation}
\item If $\beta_1=\dots=\beta_J=\beta$, then $R=1/\beta$ and the wealth distribution of each type is degenerate.
\end{enumerate}
\end{thm}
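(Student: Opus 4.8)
The plan is to solve the individual problem in closed form, aggregate, and then read off the tail behavior from the explicit wealth-growth rate of each type. First I would note that the budget constraint \eqref{eq:budget} together with the CRRA objective \eqref{eq:CRRA} is a deterministic consumption-saving problem with effective discount factor $\tilde\beta_j$ and gross return $\tilde R_j$. The standard Euler equation argument gives $c_{t+1}/c_t = (\tilde\beta_j \tilde R_j)^{1/\gamma_j}$, and hence, imposing the transversality/no-Ponzi condition, the optimal policy is linear: $c_t = (1 - (\tilde\beta_j)^{1/\gamma_j}\tilde R_j^{1/\gamma_j - 1}) w_t =: (1-s_j) w_t$ where $s_j \in (0,1)$ is the saving rate, provided the relevant finiteness condition $\tilde\beta_j \tilde R_j^{1-\gamma_j} < 1$ holds (so that lifetime utility is finite). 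Substituting back, wealth of a surviving type-$j$ agent grows deterministically at the gross rate $g_j := \tilde R_j s_j = R s_j/(1-p_j)$.

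Next I would set up the cross-sectional distribution. Within type $j$, a cohort born $n$ periods ago has mass $p_j(1-p_j)^n$ (using the stationary age distribution induced by the constant death probability $p_j$) and, starting from a common birth wealth $w_{j,0}>0$ (equal to the annuitized value of the endowment stream, which I would pin down from the budget identity), has wealth $w_{j,0} g_j^n$. The key dichotomy is whether $g_j \gtrless 1$. A short computation shows $g_j>1 \iff \beta_j R > 1$: indeed $g_j = (\beta_j R)^{1/\gamma_j}(1-p_j)^{1/\gamma_j - 1}$ after substituting $\tilde R_j$ and $\tilde\beta_j$, and since $(1-p_j)<1$ one checks $g_j>1$ is equivalent to $(\beta_j R)>(1-p_j)^{\gamma_j(1-1/\gamma_j)}\cdot(1-p_j) = 1$... more carefully, $g_j^{\gamma_j} = (\beta_j R)(1-p_j)^{1-\gamma_j}$, so $g_j>1 \iff \beta_j R > (1-p_j)^{\gamma_j-1}$; I would need to recheck this algebra against \eqref{eq:zeta}, but the exponent in \eqref{eq:zeta} is exactly $\log(1-p_j)^{-\gamma_j}/\log(\beta_j R)$, which is the $\alpha$ solving $g_j^{-\alpha}\cdot(1-p_j)=1$, i.e. the Pareto exponent of a geometrically-growing, geometrically-thinned cohort. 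So for each type with $g_j>1$, the within-type wealth distribution is a discrete power law with exponent $\alpha_j := -\gamma_j\log(1-p_j)/\log(\beta_j R)$, and the overall distribution, being a finite mixture, has upper tail exponent $\min_j \alpha_j$ over the types with $g_j>1$.

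Then I would close the equilibrium. Market clearing (risk-free asset in zero net supply, commodity market clears) determines $R$; I would argue existence by a fixed-point/intermediate-value argument on aggregate net asset demand as a function of $R$, using monotonicity and boundary behavior (as $R\to 0$ agents want to borrow, as $R$ approaches the value making some type's finiteness condition bind, asset demand explodes). The crucial structural fact to establish is that heterogeneity of $\{\beta_j\}$ forces $\beta_j R>1$ for at least one $j$: if instead $\beta_j R\le 1$ for all $j$, every type is (weakly) impatient, saving rates are such that aggregate wealth is finite and — by a permanent-income-type argument — net asset demand is strictly negative or zero, contradicting zero net supply unless all agents hold exactly their annuitized income, which by the identical-$R$ Euler conditions forces $\beta_j$ equal across $j$. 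Conversely, in the homogeneous case $\beta_1=\dots=\beta_J=\beta$, symmetry of the Euler equations with common $R$ forces $\beta R = 1$ (the only rate at which a representative-agent-like no-trade equilibrium clears), $g_j = (1-p_j)^{1/\gamma_j - 1}\cdot 1$... again I'd recheck, but with $\beta R=1$ each agent's consumption is constant and wealth stays at its birth level, giving the degenerate distribution claimed in part 2.

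The main obstacle I anticipate is the equilibrium-existence-and-characterization step: specifically, proving cleanly that $R$ is uniquely (or at least consistently) determined, that it lands in the region where every type's lifetime utility is finite (so the linear policy is valid for all $j$ simultaneously), and that heterogeneity strictly implies $\max_j \beta_j R>1$. The tail computation itself is routine once the linear policies and the geometric cohort structure are in hand — it is exactly the Wold–Whittle / Champernowne mechanism (growth plus random "death/reset") that the footnote alludes to — so the real work is verifying that the general-equilibrium fixed point delivers the parameter configuration that activates that mechanism, and handling the knife-edge homogeneous case as a degenerate limit.
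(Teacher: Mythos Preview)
Your approach is essentially the paper's: closed-form linear consumption rule, deterministic per-type wealth growth, explicit market-clearing function of $R$, intermediate-value existence, and the cohort/Wold--Whittle Pareto computation. Two corrections will make it go through cleanly. First, the growth-rate algebra simplifies more than you found: $g_j=\tilde R_j s_j=(\tilde\beta_j\tilde R_j)^{1/\gamma_j}=(\beta_j R)^{1/\gamma_j}$ with \emph{no} residual $(1-p_j)$ factor, so $g_j>1\iff \beta_j R>1$ holds on the nose and the exponent $\alpha_j$ solving $(1-p_j)g_j^{\alpha_j}=1$ is exactly $-\gamma_j\log(1-p_j)/\log(\beta_j R)$; your intermediate expression $g_j^{\gamma_j}=(\beta_j R)(1-p_j)^{1-\gamma_j}$ is an algebra slip. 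Second, the existence argument in the paper is sharper than your boundary sketch: it writes aggregate excess asset demand $f(R)$ as a sum whose $j$-th term has the sign of $(\beta_j R)^{1/\gamma_j}-1$, and runs IVT on $R\in[1,\bar R)$ with $\bar R=\min_j[\beta_j(1-p_j)^{\gamma_j}]^{-1}$ (not $R\to 0$, which would blow up the annuity values). This sign structure immediately delivers both conclusions without a separate permanent-income argument: if all $\beta_j=\beta$ every term has the same sign, forcing $R=1/\beta$ and $g_j=1$; if the $\beta_j$ differ, $f(R)=0$ forces terms of both signs, hence some $\beta_j R>1$. The bound $\alpha_j>1$ then drops out of the stationarity condition $(1-p_j)(\beta_j R)^{1/\gamma_j}<1$ that you need anyway for $W_j<\infty$.
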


Theorem \ref{thm:Pareto} shows that neither idiosyncratic investment risk nor random discounting are necessary for Pareto tails. Random birth/death is sufficient, although discount factor heterogeneity is necessary in this case.

\section{Concluding remarks}

In this paper we rigorously prove under weak conditions that, in canonical \BHA models in which (i) agents are infinitely-lived, (ii) saving is risk-free, and (iii) agents have constant discount factors, the wealth distribution always inherits the tail behavior of income shocks. The key intuition is that 
\begin{inparaenum}[(i)]
\item equilibrium considerations (market clearing) combined with the convergence results in \cite*{ChamberlainWilson2000} require the ``impatience'' condition $\beta R<1$, but
\item under this condition agents draw down their assets (Proposition \ref{prop:PIH}) and income shocks die out in the long run (Theorem \ref{thm:tailbound}).
\end{inparaenum}

The impossibility of heavier-tailed wealth distributions in canonical models comes from the fact that individual wealth shrinks with probability 1 as in \eqref{eq:accumulate}. The literature has long considered mechanisms to break the tight link between income and wealth, including random birth/death \citep*{WoldWhittle1957}, random discount factors \citep*{krusell-smith1998}, and idiosyncratic investment risk \citep*{benhabib-bisin-zhu2011}. In all of these cases, individual wealth can grow with positive probability, which essentially makes the wealth accumulation a random growth model (which is known to generate Pareto tails). Which mechanism is more important is an empirical question.

\appendix

\section{Proofs}\label{sec:proof}
\subsection{Proof of Theorem \ref{thm:tailbound}}
We first show that we may assume $\phi(x)=\rho x$ without loss of generality.
To this end, take $\rho'\in (\rho,1)$. By assumption, $\rho=\limsup_{x\to\infty}
\phi(x)/x<1$, so we can choose $\bar{x}$ such that $\phi(x)\le \rho'x$
for $x\ge \bar{x}$. Since $\phi$ is bounded on bounded sets, we
can choose $M\ge 0$ such that $\phi(x)\le M$ for $x\in [0,\bar{x}]$.
Therefore $\phi(x)\le \max\set{M,\rho'x}\le \rho'x+M$
for any $x\ge 0$, so \eqref{eq:ub1} implies
$X_t\le \phi(X_{t-1})+Y_t\le \rho'X_{t-1}+M+Y_t$.
If we define $Y_t'=Y_t+M$, then \eqref{eq:ub1} holds for $\phi(x)=\rho'x$ and
$Y_t=Y_t'$. Since adding a constant $M$ to $Y_t$ does not change its tail
behavior (\eg, boundedness, exponential decay rate, polynomial decay rate),
setting $\phi(x)=\rho x$ in \eqref{eq:ub1} costs no generality.

Iterating on \eqref{eq:ub1} with $\phi(x)=\rho x$ yields
\begin{equation}
X_t\le Y_t+\rho Y_{t-1}+\dots+\rho^{t-1}Y_1+\rho^t X_0.\label{eq:ub2}
\end{equation}

\begin{case}[$\set{Y_t}_{t=1}^\infty$ has a compact support]
Take $Y\ge 0$ such that $Y_t\in [0,Y]$ for all $t$. Then it follows from \eqref{eq:ub2} that
$$X_t\le (1+\rho+\dots+\rho^{t-1})Y+\rho^tX_0=\frac{1-\rho^t}{1-\rho}Y+\rho^t X_0\le \frac{1}{1-\rho}Y+X_0,$$
so $\set{X_t}_{t=1}^\infty$ is bounded.
\end{case}
\begin{case}[$\set{Y_t}_{t=1}^\infty$ is uniformly light-tailed]
Let $\lambda>0$ be the exponential decay rate. By definition, for any $s\in [0,\lambda)$, we have
\begin{equation}
f(s):=\sup_t \E[\e^{sY_t}]<\infty.\label{eq:MYub}
\end{equation}
In general, for any random variables $Z_1,Z_2$ and $\theta\in (0,1)$, by H\"older's inequality we have
\begin{align*}
M_{(1-\theta)Z_1+\theta Z_2}(s)&=\E[\e^{s((1-\theta)Z_1+\theta Z_2)}]=\E\left[\left(\e^{sZ_1}\right)^{1-\theta}\left(\e^{sZ_2}\right)^{\theta}\right]\\
&\le \E[\e^{sZ_1}]^{1-\theta}\E[\e^{sZ_2}]^\theta=M_{Z_1}(s)^{1-\theta}M_{Z_2}(s)^\theta.
\end{align*}
Multiplying both sides of \eqref{eq:ub2} by $1-\rho>0$, we get
\begin{equation}
(1-\rho)X_t\le \sum_{k=0}^t \theta_kY_k,\label{eq:ub3}
\end{equation}
where $Y_0\equiv (1-\rho)X_0$, $\theta_0=\rho^t$, and $\theta_k=(1-\rho)\rho^{t-k}$ for $k\ge 1$. Noting that $\theta_k\ge 0$ for all $k$ and $\sum_{k=0}^t \theta_k=1$, multiplying \eqref{eq:ub3} by $s>0$, taking the exponential, taking the expectation, and applying H\"older's inequality, it follows that
\begin{align*}
\E[\e^{(1-\rho)sX_t}]&\le \prod_{k=0}^t\E[\e^{sY_k}]^{\theta_k}\\
&=\e^{(1-\rho)\rho^t sX_0}\prod_{k=1}^t\E[\e^{sY_k}]^{\theta_k}\le \e^{(1-\rho)\rho^t sX_0} f(s)^{1-\rho^t},
\end{align*}
where $f(s)$ is as in \eqref{eq:MYub}. Redefining $(1-\rho)s$ as $s$ and noting that $0<\rho^t<1$ and $X_0\ge 0$, we obtain
$$\E[\e^{sX_t}]\le \e^{sX_0}\max\set{1,f\left(\frac{s}{1-\rho}\right)}.$$
By the definition of the exponential decay rate $\lambda>0$, the right-hand side is finite if $\frac{s}{1-\rho}<\lambda\iff s<(1-\rho)\lambda$. Therefore by definition $\set{X_t}_{t=1}^\infty$ is uniformly light-tailed, and the exponential decay rate satisfies $\lambda'\ge (1-\rho)\lambda$.
\end{case}
\begin{case}[$\set{Y_t}_{t=1}^\infty$ is uniformly integrable and heavy-tailed]
Since by assumption $\sup_t\E[Y_t^1]=\sup_t\E[Y_t]<\infty$, the polynomial
decay rate satisfies $\alpha\ge 1$. Let $s\in [1,\alpha]$. Applying
Minkowski's inequality to both sides of \eqref{eq:ub1} yields
$\E[X_t^s]^{1/s}\le \rho \E[X_{t-1}^s]^{1/s}+\E[Y_t^s]^{1/s}$.
Letting $f(s)=\sup_t \E[Y_t^s]$ and iterating, we get
$$\E[X_t^s]^{1/s}\le \frac{1-\rho^t}{1-\rho}f(s)^{1/s}+\rho^tX_0\le \frac{1}{1-\rho}f(s)^{1/s}+X_0.$$
Therefore
$$\E[X_t^s]\le \left(\frac{1}{1-\rho}f(s)^{1/s}+X_0\right)^s.$$
Since the right-hand side does not depend on $t$ and $f(s)<\infty$ for $s=1$ and $s<\alpha$ by definition, it follows that the polynomial decay rate of $\set{X_t}_{t=1}^\infty$ satisfies $\alpha'\ge \alpha\ge 1$. \qedsymbol
\end{case}

\subsection{Proof of Proposition \ref{prop:PIH}}
To prove Proposition \ref{prop:PIH}, we first use the fact that the optimal consumption rule in the original problem can be bounded below by that with zero income.

\begin{lem}\label{lem:clb}
Suppose Assumptions \ref{asmp:utility} and \ref{asmp:boundedincome} hold and $\beta R<1$. Given asset $a>0$ and state $z$, let $c(a,z),c_0(a)$ be the optimal consumption rules for the income fluctuation problem \eqref{eq:IF} with and without income (which are established in Proposition \ref{prop:politer}). Then $c(a,z)\ge c_0(a)$.
\end{lem}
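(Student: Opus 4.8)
The plan is to couple the policy function iterations of Proposition \ref{prop:politer} for the income problem and for the zero-income problem (the latter being covered by the same proposition, since $y\equiv 0$ trivially satisfies Assumption \ref{asmp:boundedincome}). Recall from Appendix \ref{sec:politer} that the Coleman-type operator $K$ underlying Proposition \ref{prop:politer} sends a candidate policy $c$ to the unique $\xi\in(0,a]$ solving
\[
u'(\xi)=\max\set{u'(a),\ \beta R\,\Ex{u'\bigl(c(R(a-\xi)+y(z_{t+1}),z_{t+1})\bigr)}{z_t=z}},
\]
and that its zero-income analogue $K_0$ sends $c$ to the unique $\xi\in(0,a]$ solving $u'(\xi)=\max\set{u'(a),\ \beta R\,u'(c(R(a-\xi)))}$; indeed $K_0$ is just $K$ with $y\equiv 0$, acting on $z$-independent policies. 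I would fix a common, $z$-independent $c^{(0)}\in\mathcal C$ and set $c^{(n)}=K^n c^{(0)}$ and $c_0^{(n)}=K_0^n c^{(0)}$, so that $c^{(n)}\to c$ and $c_0^{(n)}\to c_0$ pointwise; since $K$ maps $\mathcal C$ into itself, each $c^{(n)}$ lies in $\mathcal C$ and is in particular increasing in $a$.

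The heart of the proof is the claim that $c^{(n)}(a,z)\ge c_0^{(n)}(a)$ for all $n\ge 0$, all $a>0$, and all $z\in Z$, which I would prove by induction on $n$. For $n=0$ this holds with equality because $c^{(0)}$ does not depend on $z$. For the inductive step, fix $(a,z)$ and put $\xi=(Kc^{(n)})(a,z)$ and $\eta=(K_0 c_0^{(n)})(a)$. The borrowing constraint is never active in the zero-income problem: if $\eta=a$, then the right-hand side of its Euler equation involves $u'(c_0^{(n)}(0^{+}))=u'(0^{+})=\infty$ by the Inada condition in Assumption \ref{asmp:utility} (using $c_0^{(n)}(a)\le a$), which is impossible; hence $\eta<a$ and the second branch binds, $u'(\eta)=\beta R\,u'(c_0^{(n)}(R(a-\eta)))$. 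Now suppose for contradiction that $\xi<\eta$. Then $\xi<\eta<a$, so the second branch binds for $\xi$ as well: $u'(\xi)=\beta R\,\Ex{u'(c^{(n)}(R(a-\xi)+y(z_{t+1}),z_{t+1}))}{z_t=z}$. Since $\xi<\eta$ gives $R(a-\xi)>R(a-\eta)>0$, combining $y(z_{t+1})\ge 0$, monotonicity of $c^{(n)}$ in its first argument, and the inductive hypothesis yields
\[
c^{(n)}\bigl(R(a-\xi)+y(z_{t+1}),z_{t+1}\bigr)\ \ge\ c^{(n)}\bigl(R(a-\eta),z_{t+1}\bigr)\ \ge\ c_0^{(n)}\bigl(R(a-\eta)\bigr).
\]
Applying the decreasing map $u'$, taking the conditional expectation, and multiplying by $\beta R$ gives $u'(\xi)\le\beta R\,u'(c_0^{(n)}(R(a-\eta)))=u'(\eta)$, hence $\xi\ge\eta$ --- contradicting $\xi<\eta$. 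This closes the induction, and letting $n\to\infty$ yields $c(a,z)\ge c_0(a)$.

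The main obstacle is of a bookkeeping nature: verifying from Appendix \ref{sec:politer} that $\mathcal C$ is $K$-invariant and consists of functions increasing in $a$ (so that the monotonicity of the iterates is available), that a common $z$-independent initializer $c^{(0)}\in\mathcal C$ can be chosen, and that $K_0$ coincides with $K$ restricted to $y\equiv 0$ and to $z$-independent policies. Some care is also needed for the boundary behavior of the Euler equations, which is what lets one discard the constrained branch in the zero-income problem; this is where Assumption \ref{asmp:utility} enters. A conceptually appealing alternative would compare marginal values via $c=(u')^{-1}(V_a)$ and establish $V_a(a,z)\le V_{0,a}(a)$, but this needs differentiability of the value functions and a separate supermodularity argument, so I would favor the operator-based proof sketched above.
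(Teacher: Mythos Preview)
Your argument is correct, and it is close in spirit to the paper's proof: both rely on the Coleman operator, the monotonicity established in Lemma~\ref{lem:selfmap}, and a contradiction using the Euler equation with $u'$ strictly decreasing. The structural difference is that the paper exploits the fact that $c_0$ is already the fixed point of the zero-income operator. Instead of coupling two parallel iterations from a common initializer and inducting on $n$, the paper starts the income-problem iteration directly at $c_0$ and shows a single inequality, $Kc_0\ge c_0$; monotonicity of $K$ then gives $c_0\le K^nc_0\to c$ immediately. Thus the paper avoids both your induction and the separate verification that the constrained branch never binds in the zero-income problem (since $c_0$ is fixed, only one application of $K$ to $c_0$ is needed, and the contradiction argument handles the branches implicitly). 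Your approach is a bit longer but is entirely sound and has the mild advantage of not requiring $c_0$ to be known in advance --- you only need convergence of both iterations, which Proposition~\ref{prop:politer} supplies.
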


\begin{proof}
If an optimal consumption rule exists, by considering whether the no borrowing constraint $c\le a$ binds or not, it must satisfy the Euler equation
\begin{equation}
u'(c(a,z))=\max\set{\beta R\Ex{u'(c(R(a-c(a,z))+y',z'))}{z},u'(a)}.\label{eq:euler}
\end{equation}
We use policy function iteration as discussed in Appendix \ref{sec:politer} to characterize properties of $c(a,z)$.

Given a candidate policy function $c(a,z)$, define the policy function (Coleman) operator $(Kc)(a,z)$ as the unique value $0<t\le a$ that solves the equation
\begin{equation}
u'(t)=\max\set{\beta R\Ex{u'(c(R(a-t)+y',z'))}{z},u'(a)}.\label{eq:coleman}
\end{equation}
Lemma \ref{lem:selfmap} shows that $K$ is a well-defined monotone self map. Proposition \ref{prop:politer} shows that $K$ has a unique fixed point and $K^nc$ converges to this fixed point as $n\to\infty$. Because the optimal consumption policy is a fixed point of $K$, which is unique, it suffices to show $Kc_0(a)\ge c_0(a)$, for if that is the case we obtain $c_0(a)\le (K^nc_0)(a)\to c(a,z)$.

Let $t=(Kc_0)(a)$ solve \eqref{eq:coleman} for $c(a,z)=c_0(a)$. To show $t\ge c_0(a)$, suppose on the contrary that $t<c_0(a)$. Since by Proposition \ref{prop:politer} $c_0(a)$ is increasing in $a$, $y' \ge 0$, and $t<c_0(a)$, we have
$$c_0(R(a-t)+y')\ge c_0(R(a-c_0(a))).$$
Since $u'$ is strictly decreasing and $c_0(a)$ satisfies \eqref{eq:euler} for $y'\equiv 0$, we obtain
\begin{align*}
u'(t)>u'(c_0(a))&=\max\set{\beta R\Ex{u'(c_0(R(a-c_0(a))))}{z},u'(a)}\\
&\ge \max\set{\beta R\Ex{u'(c_0(R(a-t)+y'))}{z},u'(a)}=u'(t),
\end{align*}
which is a contradiction. Therefore $(Kc_0)(a)=t\ge c_0(a)$.
\end{proof}

Next, we derive a useful implication of BRRA. Consider the following condition:
\begin{equation}
\text{For any constant $\kappa\in (0,1)$, we have}\quad \liminf_{x\to\infty}\frac{(u')^{-1}(\kappa u'(x))}{x}>1.\label{eq:consratio}
\end{equation}
Condition \eqref{eq:consratio} is relatively weak. To see this, since $u'>0$, $\kappa\in (0,1)$, and $u''<0$ (hence $u'$ and $(u')^{-1}$ are decreasing), we always have
$$\frac{(u')^{-1}(\kappa u'(x))}{x}>\frac{(u')^{-1}(u'(x))}{x}=1.$$
Condition \eqref{eq:consratio} adds a degree of uniformity to this bound at infinity. The following lemma shows that bounded relative risk aversion (BRRA, Assumption \ref{asmp:BRRA}) is sufficient for condition \eqref{eq:consratio} to hold, and almost necessary.

\begin{lem}\label{lem:BRRA}
Let $\gamma(x)=-xu''(x)/u'(x)$ be the local relative risk aversion coefficient. Then the following statements are true.
\begin{enumerate}
\item If $\limsup_{x\to\infty}\gamma(x)<\infty$, then condition \eqref{eq:consratio} holds.
\item If $\lim_{x\to\infty}\gamma(x)=\infty$, then condition \eqref{eq:consratio} fails.
\end{enumerate}
\end{lem}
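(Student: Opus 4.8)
The plan is to work throughout with the auxiliary function $v(x):=-\log u'(x)$. Under Assumption \ref{asmp:utility}, $u'$ is a continuous, strictly decreasing bijection from $(0,\infty)$ onto $(0,\infty)$, so $(u')^{-1}$ is well-defined, $v$ is $C^1$ on $(0,\infty)$, and $v'(x)=-u''(x)/u'(x)=\gamma(x)/x>0$. For $\kappa\in(0,1)$ and $x>0$ set $y:=(u')^{-1}(\kappa u'(x))$; this is finite because $\kappa u'(x)$ lies in the range of $u'$, and $y>x$ because $(u')^{-1}$ is decreasing and $\kappa u'(x)<u'(x)$. The relation $u'(y)=\kappa u'(x)$ is equivalent to $v(y)-v(x)=\log(1/\kappa)$, so by the fundamental theorem of calculus
$$\log(1/\kappa)=v(y)-v(x)=\int_x^y\frac{\gamma(t)}{t}\,dt.$$
Both halves of the lemma then reduce to sandwiching this integral between multiples of $\int_x^y dt/t=\log(y/x)$.

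For statement (1), fix $\kappa\in(0,1)$, choose a finite $\bar{\gamma}>\limsup_{x\to\infty}\gamma(x)$, and pick $x_0$ with $\gamma(t)\le\bar{\gamma}$ for all $t\ge x_0$. Then for $x\ge x_0$ (so that $x<y$ and $\gamma\le\bar{\gamma}$ on $[x,y]$) the displayed identity gives $\log(1/\kappa)\le\bar{\gamma}\log(y/x)$, i.e.\ $(u')^{-1}(\kappa u'(x))/x=y/x\ge\kappa^{-1/\bar{\gamma}}$. Letting $x\to\infty$ yields $\liminf_{x\to\infty}(u')^{-1}(\kappa u'(x))/x\ge\kappa^{-1/\bar{\gamma}}>1$, which is exactly condition \eqref{eq:consratio}.

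For statement (2), fix an arbitrary $\kappa\in(0,1)$ and set $g(x):=\inf_{t\ge x}\gamma(t)$, which tends to $+\infty$ by hypothesis. Since $\gamma(t)\ge g(x)$ for all $t\ge x$, the same identity gives $\log(1/\kappa)\ge g(x)\log(y/x)$ once $x$ is large enough that $g(x)>0$, hence $1<(u')^{-1}(\kappa u'(x))/x\le\exp(\log(1/\kappa)/g(x))$, and the right-hand side converges to $1$ as $x\to\infty$. Thus $(u')^{-1}(\kappa u'(x))/x\to 1$, so its $\liminf$ equals $1$ and condition \eqref{eq:consratio} fails --- indeed for every $\kappa\in(0,1)$, not just one.

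The computation is essentially mechanical once the change of variables $v=-\log u'$ is in place, and the fundamental-theorem step is legitimate because $u\in C^2$ with $u'>0$ makes $v'(x)=\gamma(x)/x$ continuous. The only points requiring care are bookkeeping: in (1) one must take $\bar{\gamma}$ \emph{strictly} above the $\limsup$ so that $\kappa^{-1/\bar{\gamma}}$ stays strictly above $1$; and one should recall the pointwise bound $(u')^{-1}(\kappa u'(x))/x>1$ noted just before the lemma, which already forces the $\liminf$ to be at least $1$, so ``failing \eqref{eq:consratio}'' is equivalent to that $\liminf$ being exactly $1$. I do not anticipate a genuine obstacle here.
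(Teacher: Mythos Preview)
Your proof is correct and follows essentially the same approach as the paper: both define $y=(u')^{-1}(\kappa u'(x))$, use the Fundamental Theorem of Calculus to write $\log(1/\kappa)=\int_x^y \gamma(t)/t\,dt$ (the paper does this after the substitution $t=xs$, giving $\int_1^{y/x}\gamma(xs)/s\,ds$, which is the same integral), and then sandwich the integrand by upper or lower bounds on $\gamma$ to control $\log(y/x)$. Your organizing device $v=-\log u'$ is a cosmetic convenience, and your part (2) actually yields the slightly stronger conclusion that the ratio converges to $1$, but the mathematical content is identical to the paper's argument.
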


\begin{proof}
Take any $\kappa\in (0,1)$. For any $x>0$, define $y=(u')^{-1}(\kappa u'(x))$. By definition, $u'(y)/u'(x)=\kappa\in (0,1)$. Since $u''<0$, we have $y>x$. By the Fundamental Theorem of Calculus and \eqref{eq:RRA}, we obtain
\begin{align}
-\log \kappa&=\log u'(x)-\log u'(y)=-\int_1^{y/x}\frac{\partial}{\partial s}\log u'(xs)\diff s\notag \\
&=-\int_1^{y/x}\frac{xu''(xs)}{u'(xs)}\diff s=\int_1^{y/x}\frac{\gamma(xs)}{s}\diff s.\label{eq:logalpha}
\end{align}
\begin{enumerate}
\item If $\limsup_{x\to\infty}\gamma(x)<\infty$, then there exists $M<\infty$ such that $\gamma(x)\le M$ for large enough $x$. Then \eqref{eq:logalpha} implies
$$-\log\kappa\le \int_1^{y/x}\frac{M}{s}\diff s=M\log\frac{y}{x}\iff \frac{y}{x}\ge \kappa^{-1/M}.$$
Since this inequality holds for large enough $x$ and $y=(u')^{-1}(\kappa u'(x))$, we obtain
\begin{equation}
\liminf_{x\to\infty}\frac{(u')^{-1}(\kappa u'(x))}{x}\ge \kappa^{-1/M}>1,\label{eq:liminf1}
\end{equation}
which is \eqref{eq:consratio}.
\item If $\lim_{x\to\infty}\gamma(x)=\infty$, take any $M>0$ and choose $\bar{x}>0$ such that $\gamma(x)\ge M$ for all $x\ge \bar{x}$. Then for $x\ge \bar{x}$, by \eqref{eq:logalpha} we obtain
$$-\log \kappa\ge \int_1^{y/x}\frac{M}{s}\diff s=M\log\frac{y}{x}\iff \frac{y}{x}\le \kappa^{-1/M}.$$
Since this inequality holds for large enough $x$ and $y=(u')^{-1}(\kappa u'(x))$, we obtain
$$\liminf_{x\to\infty}\frac{(u')^{-1}(\kappa u'(x))}{x}\le \kappa^{-1/M}\to 1$$
as $M\to\infty$, so \eqref{eq:consratio} fails. \qedhere
\end{enumerate}
\end{proof}

We now use condition \eqref{eq:consratio} to prove Proposition \ref{prop:PIH}.

\begin{proof}[\bf Proof of Proposition \ref{prop:PIH}]
Let $K$ be the policy function operator defined in \eqref{eq:coleman} associated with the zero income model, so for a consumption policy $c=c(a)$, the number $t=(Kc)(a)$ solves
\begin{equation}
u'(t)=\max\set{\beta Ru'(c(R(a-t))),u'(a)}.\label{eq:foc}
\end{equation}
Fix some $m\in (1-1/R,1)$, $A>0$, and consider the candidate policy
\begin{equation}
c(a)=\begin{cases}
\epsilon c_0(a),&(0<a<A)\\
ma,&(a\ge A)
\end{cases}\label{eq:clb}
\end{equation}
where $\epsilon\in (0,1)$ is a number such that $\epsilon c_0(A)\le mA$. Clearly $c(a)$ in \eqref{eq:clb} is increasing and satisfies $0<c(a)\le a$. If we can show $(Kc)(a)\ge c(a)$ for all $a$, then by Proposition \ref{prop:politer} and Lemma \ref{lem:clb}, we obtain
$$c(a)\le (K^nc)(a)\to c_0(a)\le c(a,z),$$
which implies \eqref{eq:PIH} for $a\ge A$.

\begin{step}
Let $\bar{\gamma}=\limsup_{x\to\infty}-xu''(x)/u'(x)<\infty$ be the asymptotic relative risk aversion coefficient and define $\bar{m}$ by $(\beta R)^{1/\bar{\gamma}}=R(1-\bar{m})$. Then $\bar{m}\in (1-1/R,1]$. Furthermore, for any $m\in (1-1/R,\bar{m})$, we have
\begin{equation}
\liminf_{x\to\infty}\frac{(u')^{-1}(\beta R u(x))}{x}>\frac{1}{R(1-m)}.\label{eq:liminf2}
\end{equation}
\end{step}

Since by assumption $\beta R<1$ and $\bar{\gamma}\in [0,\infty)$, we have $(\beta R)^{1/\bar{\gamma}}\in [0,1)$. Since $(\beta R)^{1/\bar{\gamma}}=R(1-\bar{m})$ and $R\ge 1$ by assumption, we have $\bar{m}\in (1-1/R,1]$. Take any $m\in (1-1/R,\bar{m})$. Then $(\beta R)^{1/\bar{\gamma}}=R(1-\bar{m})<R(1-m)$. Since $\beta R<1$, we can take sufficiently small $M>\bar{\gamma}$ such that $(\beta R)^{1/M}<R(1-m)$. Letting $\kappa=\beta R<1$ in \eqref{eq:liminf1}, we obtain \eqref{eq:liminf2}.

\begin{step}
The following statement is true:
\begin{equation}
(\forall m\in (1-1/R,\bar{m}))(\exists A>0)(\forall a\ge A)(t=(Kc)(a)\ge ma).\label{eq:wanttoshow}
\end{equation}
\end{step}
In seeking a contradiction, suppose
\begin{equation}
(\exists m\in (1-1/R,\bar{m}))(\forall A>0)(\exists a\ge A)(t=(Kc)(a)<ma).\label{eq:negation}
\end{equation}
If $\beta Ru'(c(R(a-t)))<u'(a)$, by \eqref{eq:foc} we have $u'(t)=u'(a)\iff t=a\ge ma$, which contradicts \eqref{eq:negation}. Therefore $\beta Ru'(c(R(a-t)))\ge u'(a)$. Noting that we are considering the candidate policy \eqref{eq:clb}, it follows from \eqref{eq:foc} that
$$g(t):=u'(t)-\beta Ru'(mR(a-t))=0.$$
Since $g'(t)=u''(t)+\beta mR^2u''(mR(a-t))<0$, $g$ is strictly decreasing. Since $t<ma$ by \eqref{eq:negation}, we obtain
\begin{align*}
&0=g(t)>g(ma)=u'(ma)-\beta Ru'(mR(1-m)a)\\
\implies &\frac{(u')^{-1}(\beta R u'(mR(1-m)a))}{mR(1-m)a}<\frac{1}{R(1-m)}.
\end{align*}
By \eqref{eq:negation}, $a>0$ can be taken arbitrarily large. Therefore letting $a\to\infty$ and $x=mR(1-m)a$, we obtain
$$\liminf_{x\to\infty}\frac{(u')^{-1}(\beta R u'(x))}{x}\le \frac{1}{R(1-m)},$$
which contradicts \eqref{eq:liminf2}. Therefore \eqref{eq:wanttoshow} holds.

\begin{step}
The bound \eqref{eq:PIH} holds.
\end{step}
Take any $m\in (1-1/R,\bar{m})$ and $A>0$ such that \eqref{eq:wanttoshow} holds. Take $\epsilon\in (0,1)$ such that $\epsilon c_0(A)\le mA$ and define $c(a)$ as in \eqref{eq:clb}. By \eqref{eq:wanttoshow}, we have $(Kc)(a)\ge ma=c(a)$ for $a\ge A$. Therefore it remains to show $(Kc)(a)\ge c(a)$ for $a<A$. Suppose on the contrary that $(Kc)(a)<c(a)$ for some $a<A$. Since $c(a)=\epsilon c_0(a)$ and $\epsilon\in (0,1)$, we have $(Kc)(a)<c_0(a)$. Applying $K$ and using monotonicity and Proposition \ref{prop:politer}, it follows that
$$c_0(a)>(Kc)(a)\ge (K^nc)(a)\to c_0(a),$$
which is a contradiction. Therefore $(Kc)(a)\ge c(a)$. The rest of the proof follows from the discussion at the beginning of the proof.
\end{proof}

\subsection{Proof of Theorem \ref{thm:impossible}}

First let us show $\beta R\le 1$. By considering whether the borrowing constraint binds or not, the Euler equation becomes
$$u'(c_t)=\max\set{\beta R\E_t[u'(c_{t+1})],u'(a_t)}.$$
Therefore $u'(c_t)\ge \beta R\E_t[u'(c_{t+1})]$. Multiplying both sides by $(\beta R)^t>0$ and letting $M_t=(\beta R)^tu'(c_t)>0$, we obtain $M_t\ge \E_t[M_{t+1}]$. Since $M_0=u'(c_0)<\infty$, the process $\set{M_t}_{t=0}^\infty$ is a supermartingale. By the Martingale Convergence Theorem \cite*[p.~148]{Pollard2002}, there exists an integrable random variable $M\ge 0$ such that $\lim_{t\to\infty}M_t=M$ almost surely. Since $\E[M]<\infty$, we have $M<\infty$ (\as). If $\beta R>1$, then
$$u'(c_t)=(\beta R)^{-t}M_t\to 0\cdot M=0\quad (\as),$$
so $c_t\to\infty$ (\as) because $u'>0$ and $u'(\infty)=0$. (This argument is exactly the same as Theorem 2 of \cite*{ChamberlainWilson2000}.) Since this is true for any agent, the aggregate consumption diverges to infinity, which is impossible in a model with a wealth distribution with a finite mean. Therefore $\beta R\le 1$.

If $R\neq 1/\beta$, then $\beta R<1$, so the conclusion follows from Proposition \ref{prop:IF}.

Finally, suppose Assumption \ref{asmp:CW} holds. If $\beta R=1$, then by \citet*[Theorem 4]{ChamberlainWilson2000} we have $c_t\to\infty$ (\as), which is a contradiction. \qedsymbol

\subsection{Proof of Theorem \ref{thm:Pareto}}
Since the proof is similar to \citet*[Theorems 3, 4]{toda-discount}, we only provide a sketch.

It is well-known that the optimal consumption rule for maximizing the CRRA utility \eqref{eq:CRRA} subject to the budget constraint \eqref{eq:budget} is
$$c=\left(1-\tilde{\beta}_j^{1/\gamma_j}\tilde{R}_j^{1/\gamma_j-1}\right)w.$$
(See \cite*{levhari-srinivasan1969}, \cite*{samuelson1969}, or more generally, \cite*{Toda2014JET,toda-discount} for the solution in a Markovian environment.) Then the law of motion for wealth becomes
\begin{equation}
w_{t+1}=\left(\tilde{\beta}_j\tilde{R}_j\right)^{1/\gamma_j}w_t=(\beta_j R)^{1/\gamma_j}w_t.\label{eq:wdynamics}
\end{equation}
Let $W_j$ be the average wealth of type $j$ agents. Since agents are born and die with probability $p_j$, by accounting we obtain
\begin{equation}
W_j=(1-p_j)(\beta_j R)^{1/\gamma_j}W_j+p_jw_{j0},\label{eq:account}
\end{equation}
where
\begin{equation}
w_{j0}=\sum_{t=0}^\infty \tilde{R}_j^{-t}y_j=\frac{\tilde{R}_j}{\tilde{R}_j-1}y_j\label{eq:wj0}
\end{equation}
is the initial wealth (present discounted value of future income) of type $j$ agents. Assuming $(1-p_j)(\beta_j R)^{1/\gamma_j}<1$, we can solve \eqref{eq:account} to obtain
\begin{equation}
W_j=\frac{p_jw_{j0}}{1-(1-p_j)(\beta_j R)^{1/\gamma_j}}.\label{eq:Wj}
\end{equation}
Combining \eqref{eq:wj0} and \eqref{eq:Wj}, the market clearing condition becomes
\begin{equation}
0=\sum_{j=1}^J\pi_j(W_j-w_{j0})=\sum_{j=1}^J\frac{R\pi_jy_j\left((\beta_j R)^{1/\gamma_j}-1\right)}{\left(\frac{R}{1-p_j}-1\right)\left(1-(1-p_j)(\beta_j R)^{1/\gamma_j}\right)}.\label{eq:clear}
\end{equation}
Let $f(R)$ be the right-hand side of \eqref{eq:clear} and $\bar{R}=\min_j[\beta_j(1-p_j)^{\gamma_j}]^{-1}$, which is greater than 1 since $\beta_j,p_j\in (0,1)$ and $\gamma_j>0$. Since
$$\frac{R}{1-p_j}>1,\quad (1-p_j)(\beta_j R)^{1/\gamma_j}<1$$
on $R\in [1,\bar{R})$, the function $f(R)$ is well-defined in this range and the denominators are positive. Since $\beta_j<1$, we have $f(1)<0$. Since $(\beta_j R)^{1/\gamma_j}\to \frac{1}{1-p_j}>1$ as $R\uparrow \bar{R}$ for $j$ that achieves the minimum in the definition of $\bar{R}$, we have $f(R)\uparrow \infty$ as $R\uparrow \bar{R}$. Clearly $f$ is continuous on $[1,\bar{R})$. Therefore there exists $R\in (1,\bar{R})$ such that $f(R)=0$, so an equilibrium exists.

To show the implications for the wealth distribution, first consider the case $\beta_1=\dots=\beta_J=\beta$. Then
$$f(R)=\sum_{j=1}^J\frac{R\pi_jy_j\left((\beta R)^{1/\gamma_j}-1\right)}{\left(\frac{R}{1-p_j}-1\right)\left(1-(1-p_j)(\beta R)^{1/\gamma_j}\right)}.$$
Since the denominators are positive on $[1,\bar{R})$ and the sign of the numerators depends only on the magnitude of $\beta R$ relative to 1, we have $f(R)\gtrless 0$ according as $R\gtrless 1/\beta$. Therefore the unique equilibrium gross risk-free rate is $R=1/\beta$. Then the dynamics of wealth \eqref{eq:wdynamics} becomes $w_{t+1}=w_t$, so individual wealth is constant over time and the wealth distribution of each type is degenerate.

Finally, suppose $\set{\beta_j}_{j=1}^J$ take at least two distinct values. Let $G_j=(\beta_j R)^{1/\gamma_j}$ be the gross growth rate of wealth in \eqref{eq:wdynamics}. Since the numerator in \eqref{eq:clear} has the same sign as $G_j-1$ and $G_j\gtrless 1$ according as $\beta_j\gtrless 1/R$, the fact that $f(R)=0$ and $\set{\beta_j}_{j=1}^J$ are not all equal implies that there is some $j$ with $G_j>1$ and others with $G_j<1$.

An agent type with $G_j\le 1$ does not affect the upper tail of the wealth distribution since the wealth does not grow. Consider any type $j$ with $G_j>1$. Because type $j$ agents die with probability $p_j$ every period, the probability that an agent survives at least $n$ periods is $(1-p_j)^n$. Then wealth is at least $G_j^nw_{j0}$. Therefore we have
$$\Pr(w\ge G_j^nw_{j0})=(1-p_j)^n,$$
so letting $x=G_j^nw_{j0}$, we obtain $\Pr(w\ge x)=(x/w_{j0})^{-\alpha_j}$ for
$$\alpha_j=-\frac{\log(1-p_j)}{\log G_j}=-\gamma_j\frac{\log(1-p_j)}{\log (\beta_jR)}>0.$$
Therefore the wealth distribution of type $j$ has a Pareto upper tail with exponent $\alpha_j$. Noting that $(1-p_j)(\beta_jR)^{1/\gamma_j}<1$ and $\beta_jR>1$, taking the logarithm we obtain $\alpha_j>1$. Since the Pareto exponent of the entire cross-sectional distribution is the smallest exponent among all types, we obtain \eqref{eq:zeta}. \qedsymbol

\section{Policy iteration in income fluctuation problem}\label{sec:politer}

In this appendix we adopt the arguments in \cite*{LiStachurski2014} to characterize the solution to the income fluctuation problem by policy function iteration.\footnote{Our discussion is slightly different from \cite*{LiStachurski2014} due to the timing convention. In \cite*{LiStachurski2014}, $a$ is savings (end-of-period asset holdings) and the budget constraint is $c+a'=Ra+y$. In our framework, $a$ is beginning-of-period wealth and the budget constraint is $a'=R(a-c)+y'$. This change in the notation allows us to weaken the monotonicity requirement in Assumption 2.3 of \cite*{LiStachurski2014}.} Throughout this appendix, we maintain Assumptions \ref{asmp:utility} and \ref{asmp:boundedincome} but not Assumption \ref{asmp:BRRA}.

Let $S=\R_{++}\times Z$ be the state space, where $Z$ is as in Assumption \ref{asmp:boundedincome}. To identify a solution, let $\mathcal{C}$ be the set of functions $c:S\to\R$ such that $c(a,z)$ is increasing in $a$, $0<c(a,z)\le a$, and $\norm{u'\circ c-\phi}<\infty$ (sup norm), where $\phi(a,z)\equiv u'(a)$. The set $\mathcal{C}$ identifies a set of candidate consumption functions. For $c,d\in \mathcal{C}$, define the distance
$$\rho(c,d)=\norm{u'\circ c-u'\circ d}.$$

\begin{lem}\label{lem:metric}
$(\mathcal{C},\rho)$ is a complete metric space.
\end{lem}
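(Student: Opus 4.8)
The plan is to verify the three defining properties of a metric for $\rho$ on $\mathcal{C}$, and then to establish completeness by transporting Cauchy sequences and limits through the map $c \mapsto u' \circ c$. The key observation is that $u'$ is a continuous, strictly decreasing bijection from $(0,\infty)$ onto $(0,\infty)$ (using $u'>0$, $u''<0$, $u'(0)=\infty$, $u'(\infty)=0$ from Assumption \ref{asmp:utility}), so composition with $u'$ is an injective operation on functions taking values in $(0,\infty)$, and $\rho(c,d) = \norm{u'\circ c - u'\circ d}$ inherits the metric axioms from the sup norm $\norm{\cdot}$ on bounded functions $S \to \R$. Symmetry and the triangle inequality are immediate. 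For $\rho(c,d)=0$: this forces $u'(c(a,z)) = u'(d(a,z))$ pointwise, and injectivity of $u'$ gives $c = d$. That $\rho(c,d)<\infty$ for all $c,d\in\mathcal{C}$ follows from the defining condition $\norm{u'\circ c - \phi}<\infty$ together with the triangle inequality, where $\phi(a,z)=u'(a)$.

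For completeness, let $\set{c_n}$ be a $\rho$-Cauchy sequence in $\mathcal{C}$. Then $\set{u'\circ c_n}$ is Cauchy in the sup norm, hence converges uniformly to some bounded function $g:S\to\R$. First I would record that $\norm{g - \phi} \le \liminf_n \norm{u'\circ c_n - \phi} < \infty$ requires a uniform bound on the latter quantities, which follows since a Cauchy sequence is bounded in its metric and $\rho(c_n, c_1)$ is therefore bounded; combined with $\norm{u'\circ c_1 - \phi}<\infty$ this gives $\sup_n \norm{u'\circ c_n - \phi}<\infty$, so $\norm{g-\phi}<\infty$. Next, since each $u'(c_n(a,z))$ lies in $(0,\infty)$ and, for each fixed $(a,z)$, the condition $0 < c_n(a,z) \le a$ translates via the decreasing bijection $u'$ into $u'(c_n(a,z)) \ge u'(a) > 0$, the limit satisfies $g(a,z) \ge u'(a) > 0$; in particular $g$ is strictly positive, so I may define $c(a,z) := (u')^{-1}(g(a,z))$. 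By construction $u'\circ c = g$, hence $\rho(c_n, c) = \norm{u'\circ c_n - g} \to 0$. It remains to check $c \in \mathcal{C}$: the bound $g(a,z)\ge u'(a)$ gives $c(a,z) \le a$; positivity of $c$ is automatic; and $\norm{u'\circ c - \phi} = \norm{g-\phi}<\infty$ was shown above.

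The one genuinely non-routine step is verifying that $c$ is increasing in $a$. Each $c_n$ is increasing in $a$, so $u'\circ c_n$ is decreasing in $a$ (as $u'$ is decreasing); uniform convergence preserves monotonicity, so $g(\cdot,z)$ is decreasing in $a$, and applying the decreasing map $(u')^{-1}$ recovers that $c(\cdot,z) = (u')^{-1}(g(\cdot,z))$ is increasing in $a$. (One should be slightly careful that ``increasing'' here is in the weak sense, which is exactly what is needed for membership in $\mathcal{C}$ and is preserved under pointwise limits.) I expect the main obstacle, such as it is, to be bookkeeping: ensuring the $\liminf$/$\sup$ argument for $\norm{g-\phi}<\infty$ is stated cleanly, and being careful that $(u')^{-1}$ is well-defined on the range of $g$, which is guaranteed precisely by the lower bound $g \ge \phi > 0$ inherited from the constraint $c_n \le a$. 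With these points in hand, $\set{c_n}$ converges to $c$ in $(\mathcal{C},\rho)$, establishing completeness.
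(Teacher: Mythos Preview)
Your proposal is correct and follows essentially the same route as the paper: verify the metric axioms via the sup-norm triangle inequality and strict monotonicity of $u'$, then for completeness push a $\rho$-Cauchy sequence through $u'$, extract a limit, invert with $(u')^{-1}$, and check membership in $\mathcal{C}$ (including monotonicity in $a$, which the paper handles the same way). One small slip worth tidying: the uniform limit $g$ is not itself bounded (since $g(a,z)\ge u'(a)\to\infty$ as $a\downarrow 0$); what is bounded and lives in the complete sup-norm space is $u'\circ c_n - \phi$, and its limit is your $g-\phi$, which is exactly the quantity you go on to bound.
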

\begin{proof}
Take any $c,d,e\in \mathcal{C}$. Clearly $\rho(c,d)\ge 0$, $\rho(c,d)=\rho(d,c)$, and
$$\rho(c,d)=0\iff (\forall a,z)u'(c(a,z))=u'(d(a,z))\iff c=d.$$
Furthermore, by the triangle inequality for the sup norm, we have
$$\rho(c,d)=\norm{u'\circ c-u'\circ d}\le \norm{u'\circ c-\phi}+\norm{u'\circ c-\phi}<\infty$$
and
$$\rho(c,d)=\norm{u'\circ c-u'\circ d}\le \norm{u'\circ c-u'\circ e}+\norm{u'\circ e-u'\circ d}=\rho(c,e)+\rho(e,d).$$
Therefore $(\mathcal{C},\rho)$ is a metric space.

To show completeness, suppose $\set{c_n}$ is a Cauchy sequence in $(\mathcal{C},\rho)$. Then $\set{u'(c_n(a,z))}$ is a Cauchy sequence in $\R$, so it has a limit $\mu$. Since $c_n(a,z)\le a$ and $u'$ is strictly decreasing, we have $u'(c_n(a,z))\ge u'(a)$ and hence $\mu\ge u'(a)$. Since $u'$ is continuous, strictly decreasing, and $\mu<\infty$, there exists a unique $c(a,z)\in (0,a]$ such that $u'(c(a,z))=\mu$. Since $u'$ is continuous, we have $c_n(a,z)\to c(a,z)$ pointwise. Since $c_n(a,z)$ is increasing in $a$, so is $c(a,z)$. Therefore $(\mathcal{C},\rho)$ is complete.
\end{proof}

For $c\in \mathcal{C}$, define $(Kc)(a,z)$ by the value $t\in (0,a]$ that satisfies the Euler equation \eqref{eq:coleman}.

\begin{lem}\label{lem:selfmap}
For any $c\in \mathcal{C}$ and $(a,z)\in S$, $(Kc)(a,z)$ is well-defined. Furthermore, $K:\mathcal{C}\to\mathcal{C}$ and $c\le d\implies Kc\le Kd$.
\end{lem}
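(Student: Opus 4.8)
The plan is to establish Lemma \ref{lem:selfmap} in three stages: first, that $(Kc)(a,z)$ is well-defined; second, that $Kc$ lands back in $\mathcal C$; third, that $K$ is monotone. For well-definedness, fix $c \in \mathcal C$ and $(a,z) \in S$, and consider the function $t \mapsto u'(t) - \max\set{\beta R \Ex{u'(c(R(a-t)+y',z'))}{z}, u'(a)}$ on $(0,a]$. As $t$ increases, $u'(t)$ is strictly decreasing (by $u''<0$), while $R(a-t)+y'$ decreases in $t$, so $c(R(a-t)+y',z')$ decreases (since $c$ is increasing in its first argument), so $u'(c(\cdot))$ increases, so the term inside the max is nondecreasing in $t$ and hence $-\max\set{\cdots}$ is nonincreasing; thus the whole expression is strictly decreasing in $t$. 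At $t = a$ we have $R(a-t)+y' = y' \ge 0$, and one checks the expression is $\le 0$ there (using $u'(a) \ge 0$ in the max and $u'(a)-u'(a)=0$ when the constant term dominates); as $t \downarrow 0$, $u'(t) \to \infty = u'(0)$ while the max term is bounded by $\beta R \Ex{u'(c(\cdot))}{z}$, which is finite because $c \in \mathcal C$ forces $u'\circ c \le \phi + \norm{u'\circ c - \phi}$ and $\Ex{\phi(R(a-t)+y',z')}{z} = \Ex{u'(R(a-t)+y')}{z}$ is finite (here I would invoke Assumption \ref{asmp:boundedincome} or the structure guaranteeing the conditional expectation is finite, as in \cite*{LiStachurski2014}); hence the expression is positive near $0$. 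By continuity (Assumption \ref{asmp:utility}) and strict monotonicity, there is a unique zero $t \in (0,a]$, so $(Kc)(a,z)$ is well-defined.

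Next I would verify $Kc \in \mathcal C$. Write $t = (Kc)(a,z)$. The bound $0 < t \le a$ is immediate from the construction. For monotonicity of $t$ in $a$: fix $z$ and take $a_1 < a_2$ with corresponding values $t_1, t_2$; suppose for contradiction $t_1 > t_2$. Then $R(a_1 - t_1) + y' < R(a_2 - t_2) + y'$ is not immediate, so instead I would argue directly from the Euler equation that the map $a \mapsto (Kc)(a,z)$ inherits monotonicity — a standard argument: if $t_1 > t_2$ then $u'(t_1) < u'(t_2)$, but the right-hand side of \eqref{eq:coleman} evaluated at $(a_1, t_1)$ is $\ge$ that at $(a_2, t_2)$ because $R(a_1 - t_1) \le R(a_2 - t_2)$ would need $a_1 - t_1 \le a_2 - t_2$, which follows once we know savings $a - t$ is increasing; this circularity is resolved by the usual argument that both the no-constraint branch and the constrained branch $t = a$ preserve monotonicity, so the pointwise max / selection does too. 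For the norm condition $\norm{u'\circ Kc - \phi} < \infty$: from \eqref{eq:coleman}, $u'(t) \ge u'(a) = \phi(a,z)$, giving a one-sided bound; the other side, $u'(t) \le \phi(a,z) + \text{const}$, follows because $u'(t) = \max\set{\beta R \Ex{u'(c(R(a-t)+y',z'))}{z}, u'(a)}$ and $\beta R \Ex{u'(c(R(a-t)+y',z'))}{z} \le \beta R (\norm{u'\circ c - \phi} + \Ex{u'(R(a-t)+y')}{z}) \le \beta R \norm{u'\circ c - \phi} + \beta R \Ex{u'(y')}{z}$, using $R(a-t)+y' \ge y'$ and monotonicity of $u'$; since $\beta R < 1$ is \emph{not} assumed in this appendix lemma, I would instead bound $\Ex{u'(R(a-t)+y')}{z}$ crudely — note $R(a - t) \ge 0$ so it suffices that $\Ex{u'(y')}{z}$ is finite, which again comes from the $\mathcal C$-membership applied along with a baseline finiteness from \cite*{LiStachurski2014}.

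Finally, for monotonicity of $K$: suppose $c \le d$ in $\mathcal C$, i.e. $c(a,z) \le d(a,z)$ pointwise. Let $t_c = (Kc)(a,z)$, $t_d = (Kd)(a,z)$, and suppose for contradiction $t_c > t_d$. Then $R(a - t_c) + y' \le R(a - t_d) + y'$, so by monotonicity of $c$ and then $c \le d$ and monotonicity of $u'$ (decreasing), $u'(c(R(a-t_c)+y',z')) \ge u'(c(R(a-t_d)+y',z')) \ge u'(d(R(a-t_d)+y',z'))$; taking conditional expectations and the max with $u'(a)$ preserves this, so the right-hand side of \eqref{eq:coleman} for $c$ at $t_c$ is $\ge$ that for $d$ at $t_d$, i.e. $u'(t_c) \ge u'(t_d)$, hence $t_c \le t_d$ — contradicting $t_c > t_d$. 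Therefore $Kc \le Kd$. The main obstacle I anticipate is the norm-boundedness step $\norm{u'\circ Kc - \phi} < \infty$ without the impatience condition $\beta R < 1$: one must carefully use the structure of $\mathcal C$ together with the conditional-mean control on income (Assumption \ref{asmp:boundedincome}) exactly as in \cite*{LiStachurski2014}, rather than relying on a contraction-type estimate, and getting the timing-convention bookkeeping right (beginning-of-period versus end-of-period wealth, as flagged in the appendix's footnote) is where the argument is most delicate.
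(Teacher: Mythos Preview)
Your approach is the same as the paper's: an intermediate-value argument for well-definedness, and contradiction arguments for the two monotonicity claims. But your monotonicity-in-$a$ step contains a self-inflicted gap.

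You write that from $a_1 < a_2$ and (for contradiction) $t_1 > t_2$, the inequality $R(a_1 - t_1) + y' < R(a_2 - t_2) + y'$ ``is not immediate,'' and then retreat to a vague ``usual argument'' about branches and circularity. In fact the inequality \emph{is} immediate: $(a_2 - t_2) - (a_1 - t_1) = (a_2 - a_1) + (t_1 - t_2) > 0$. With this in hand the paper's contradiction is one line,
\[
u'(t_2) > u'(t_1) = \max\bigl\{\beta R\,\Ex{u'(c(R(a_1-t_1)+y',z'))}{z},\, u'(a_1)\bigr\} \ge \max\bigl\{\beta R\,\Ex{u'(c(R(a_2-t_2)+y',z'))}{z},\, u'(a_2)\bigr\} = u'(t_2),
\]
using that $c$ is increasing, $u'$ is decreasing, and $u'(a_1)\ge u'(a_2)$. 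Your ``circularity'' discussion is not a proof and should be replaced by this direct computation; the same one-line pattern handles $c\le d\implies Kc\le Kd$ (which you do correctly).

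A smaller point on well-definedness: you invoke Assumption~\ref{asmp:boundedincome} to control $\Ex{u'(R(a-t)+y')}{z}$, but for $t\in(0,a)$ the deterministic bound $R(a-t)+y'\ge R(a-t)>0$ already gives $u'(c(R(a-t)+y',z'))\le u'(R(a-t))+\|u'\circ c-\phi\|$, a constant independent of $y'$; no integrability of income is needed there. Your caution about the norm condition $\|u'\circ Kc-\phi\|<\infty$ is reasonable---the paper passes over it, simply asserting that monotonicity in $a$ is all that ``remains''---and you are right that this is where the Li--Stachurski integrability hypothesis and the timing convention actually do work.
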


\begin{proof}
Fix any $c\in \mathcal{C}$ and $(a,z)\in S$. For $t\in (0,a]$, define
$$g(t)=u'(t)-\max\set{\beta R\Ex{u'(c(R(a-t)+y',z'))}{z},u'(a)}.$$
The second term is finite because of the $\max$ operator and $u'(a)<\infty$. Therefore $g$ is finite on $(0,a]$. Since $u''<0$, $g$ is continuous and strictly decreasing. Furthermore, $g(0)=\infty$ and
$$g(a)=u'(a)-\max\set{\beta R\Ex{u'(c(y',z'))}{z},u'(a)}\le 0.$$
Therefore there exists a unique $t\in (0,a]$ such that $g(t)=0$, so \eqref{eq:coleman} holds.

To show $K:\mathcal{C}\to\mathcal{C}$, it remains to show that $(Kc)(a,z)$ is increasing in $a$. To show this, let $a_1\le a_2$ and $t_j=(Kc)(a_j,z)$ for $j=1,2$. Suppose on the contrary that $t_1>t_2$. Then using the fact that $c$ is increasing and $u'$ is strictly decreasing, we obtain
\begin{align*}
u'(t_2)>u'(t_1)&=\max\set{\beta R\Ex{u'(c(R(a_1-t_1)+y',z'))}{z},u'(a_1)}\\
&\ge \max\set{\beta R\Ex{u'(c(R(a_2-t_2)+y',z'))}{z},u'(a_2)}=u'(t_2),
\end{align*}
which is a contradiction. Therefore $t_1\le t_2$. The proof of $c\le d\implies Kc\le Kd$ is similar.
\end{proof}

\begin{lem}\label{lem:contraction}
If $\beta R<1$, then $\rho(Kc,Kd)\le \beta R\rho(c,d)$ for all $c,d\in \mathcal{C}$.
\end{lem}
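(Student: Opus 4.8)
The plan is to reduce the Lipschitz estimate to a pointwise comparison of the two equations that define $(Kc)(a,z)$ and $(Kd)(a,z)$, using monotonicity in the control to compensate for the fact that the optimizing $t$ differs between $c$ and $d$. To set this up, for $c\in\mathcal{C}$, $(a,z)\in S$, and $t\in(0,a]$ write
$$F_c(a,z,t):=\max\set{\beta R\Ex{u'(c(R(a-t)+y',z'))}{z},\,u'(a)},$$
so that, by Lemma \ref{lem:selfmap}, $(Kc)(a,z)$ is the unique $t\in(0,a]$ solving $u'(t)=F_c(a,z,t)$. I would first record two elementary facts. (a) For fixed $(a,z)$ and fixed $c\in\mathcal{C}$, the map $t\mapsto F_c(a,z,t)$ is nondecreasing, because increasing $t$ decreases $R(a-t)+y'$, hence decreases $c(R(a-t)+y',z')$ (as $c$ is increasing in its first argument), hence increases $u'(c(R(a-t)+y',z'))$ (as $u'$ is decreasing), and taking the maximum with the constant $u'(a)$ preserves monotonicity. (b) For fixed $(a,z,t)$,
\begin{align*}
\abs{F_c(a,z,t)-F_d(a,z,t)}
&\le \beta R\,\Ex{\abs{u'(c(R(a-t)+y',z'))-u'(d(R(a-t)+y',z'))}}{z}\\
&\le \beta R\,\rho(c,d),
\end{align*}
where the first inequality uses that $x\mapsto\max\set{x,u'(a)}$ is $1$-Lipschitz together with the triangle inequality for conditional expectations, and the second uses that $\rho(c,d)=\norm{u'\circ c-u'\circ d}$ is the sup norm.

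Given these, I would fix $(a,z)\in S$ and set $t_c=(Kc)(a,z)$, $t_d=(Kd)(a,z)$. Since the claimed inequality is symmetric in $c$ and $d$ and $u'$ is strictly decreasing, I may assume $t_c\le t_d$, equivalently $u'(t_c)\ge u'(t_d)$. Then
$$0\le u'(t_c)-u'(t_d)=F_c(a,z,t_c)-F_d(a,z,t_d)\le F_c(a,z,t_c)-F_d(a,z,t_c)\le \beta R\,\rho(c,d),$$
where the middle inequality applies fact (a) to $F_d$ with $t_c\le t_d$, and the last applies fact (b) at $t=t_c$. Hence $\abs{u'((Kc)(a,z))-u'((Kd)(a,z))}\le\beta R\,\rho(c,d)$ for every $(a,z)\in S$, and taking the supremum over $S$ gives $\rho(Kc,Kd)\le\beta R\,\rho(c,d)$.

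The step I expect to require the most care is precisely the mismatch just exploited: the interior (Euler) and constrained ($c\le a$) regions of the two policies need not coincide, and the maximizing $t$ depends on the policy, so $F_c$ and $F_d$ cannot be compared directly at a common point. Monotonicity of $F_d$ in $t$ (fact (a)) is the device that handles this, letting us replace $F_d(a,z,t_d)$ by $F_d(a,z,t_c)$ in the favorable direction once $t_c\le t_d$ has been arranged. Everything else---the $1$-Lipschitz property of $x\mapsto\max\set{x,\cdot}$, the triangle inequality for conditional expectations, finiteness of the relevant conditional expectations when $t<a$ (guaranteed by membership in $\mathcal{C}$), and the trivial case $t_c=t_d=a$---is routine.
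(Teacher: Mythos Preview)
Your proof is correct and follows precisely the standard argument for Coleman-operator contractions that the paper defers to \cite*{LiStachurski2014} for: compare the defining equations at the two optimizers, use monotonicity of the right-hand side in $t$ to align the evaluation points, and bound the residual via the $1$-Lipschitz property of $x\mapsto\max\{x,u'(a)\}$ together with the sup-norm definition of $\rho$. There is nothing to add.
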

\begin{proof}
Similar to the proof of Lemma A.5 in \cite*{LiStachurski2014}.
\end{proof}

\bibliographystyle{plainnat}
\bibliography{reference}

\end{document}